\title{\Large Testable Properties in General Graphs and Random Order Streaming}
\author{Artur Czumaj
        \thanks{Department of Computer Science and Centre for Discrete Mathematics and its Applications (DIMAP), University of Warwick. Email: A.Czumaj@warwick.ac.uk. Research partially supported by the Centre for Discrete Mathematics and its Applications (DIMAP), by IBM Faculty Award, and by EPSRC award EP/N011163/1.}
    \and
Hendrik Fichtenberger
        \thanks{Department of Computer Science, TU Dortmund. Email: 	hendrik.fichtenberger@tu-dortmund.de. Research supported by ERC grant No. 307696.}
    \and
Pan Peng
        \thanks{Department of Computer Science, University of Sheffield. Email: p.peng@sheffield.ac.uk.}
    \and
Christian Sohler
        \thanks{Department of Computer Science, TU Dortmund. Email: christian.sohler@tu-dortmund.de. Research supported by ERC grant No. 307696.}
}
\date{}
\def\zeit{\number\shorthour:\ifnum\shortminute<10 0\number\shortminute
\else\number\shortminute\fi}
\date{\small \zeit{}, \today}
\date{}
\theoremstyle{plain}
\newtheorem{theorem}{Theorem}[section]
\newtheorem{lemma}[theorem]{Lemma}
\newtheorem{corollary}[theorem]{Corollary}
\newtheorem{claim}[theorem]{Claim}
\theoremstyle{plain}
\newtheorem{definition}[theorem]{Definition}
\def\eg/{%
	e.\,g.%
}
\def\ie/{%
	i.\,e.%
}
\def\ifft/{%
	if and only if%
}
\def\cf/{%
	cf.%
}
\def\Wlog/{%
	Without loss of generality%
}
\def\wwlog/{%
	without loss of generality%
}
\newcommand{\ent}{\mathcal{F}}
\newcommand{\degr}[1]{%
	\dg(#1)%
}
\newcommand{\ngh}[1]{%
	\Gamma(#1)%
}
\newcommand{\E}[1]{%
	E(#1)%
}
\newcommand{\defeq}{%
	:=%
}
\newcommand{\Ex}{%
	\mathrm{E}%
}
\newcommand{\setn}{%
	\mathbb{N}%
}
\newcommand{\onerange}[1]{%
	{[#1]}%
}
\newcommand{\fstop}{%
	\, .
}
\newcommand{\disk}[1]{%
	\mbox{\(#1\)-disk}%
}
\newcommand{\disks}[1]{%
	\mbox{\(#1\)-disks}%
}
\def\kdisk/{%
	\disk{k}%
}
\def\kdisks/{%
	\disks{k}%
}
\renewcommand{\ie}{i.e., }
\renewcommand{\degr}[1]{\ensuremath{\deg(#1)}}
\newcommand{\stream}{\ensuremath{\mathcal{S}}}
\newcommand{\reach}{\ensuremath{\mathrm{Reach}}}
\renewcommand{\E}{\ensuremath{\mathrm{E}}}
\newcommand{\Var}{\ensuremath{\mathrm{Var}}}
\newcommand{\cst}{\ensuremath{{c_*}}}
\begin{document}

\begin{titlepage}
\maketitle\thispagestyle{empty}
	
\begin{abstract}
We present a novel framework closely linking the areas of \emph{property testing} and \emph{data streaming algorithms} in the setting of \emph{general graphs}. It has been recently shown (Monemizadeh et al.\ 2017) that for \emph{bounded-degree graphs}, any constant-query tester can be emulated in the random order streaming model by a streaming algorithm that uses only space required to store a constant number of words. However, in a more natural setting of \emph{general graphs}, with no restriction on the maximum degree, no such results were known because of our lack of understanding of constant-query testers in general graphs and lack of techniques to appropriately emulate in the streaming setting off-line algorithms allowing many high-degree vertices.
		
In this work we advance our understanding on both of these challenges.
\begin{itemize}
\item First, we provide \emph{canonical testers} for all constant-query testers %
    for general graphs, both, for one-sided and two-sided errors. Such canonizations were only known before (in the adjacency matrix model) for dense graphs~\cite{GT03:three} and (in the adjacency list model) for bounded degree (di-)graphs~\cite{GR11:proximity,CPS16:testing}. %
\item Using the concept of canonical testers, we then prove that every property of general graphs that is constant-query testable with one-sided error %
    can also be \emph{tested in constant-space} with one-sided error in the \emph{random order streaming model}.
\end{itemize}
Our results imply, among others, that properties like $(s,t)$ disconnectivity, $k$-path-freeness, etc. are constant-space testable in random order streams.%
\end{abstract}
\end{titlepage}

\section{Introduction}
\label{sec:introduction}

\paragraph{Graph streaming algorithms.}
One important way of processing large graphs in modern data analysis is to design \emph{graph streaming algorithms} (see, e.g., \cite{McG14:stream,Muthu05}). A graph streaming algorithm obtains the input graph as a stream of edges in some order and its goal is to process and analyze the input stream in order to compute some basic characteristics about the input graph. For example, we want to know whether the graph is connected, or bipartite, or to know its approximate maximum matching size. Following the mainstream research in data streaming, we focus on algorithms that make only a \emph{single pass} over the graph stream. Since in the single pass model every edge is seen only once, the central complexity measure of data streaming algorithms is the amount of space used to store information about the graph, with the golden standard in streaming being \emph{sublinear space}. Unfortunately, it is known that for many natural graph problems sublinear space $o(n)$ is not possible when the edges are arriving in a single pass and in arbitrary order, where $n$ is the number of vertices of the input graph \cite{HRR99:stream}.

There have been several approaches to cope with this inherent limitation of the streaming setting for graph problems. While some of the early works in graph streaming algorithms approached this challenge by allowing more than one pass over the input, the single-pass model is still considered to be the most interesting and the most natural scenario for streaming algorithms. The $\Omega(n)$ space lower bound (e.g., for testing if the graph is connected or estimating the size of transitive closure~\cite{HRR99:stream}) led to a significant number of papers designing semi-streaming algorithms, %
which are algorithms using $O(n \ \mathrm{polylog}(n))$ space, so only slightly larger than linear in the number of vertices~(see the survey \cite{McG14:stream}). While this model leads to sublinear algorithms for dense graphs, where $m$, the number of edges, is $\omega(n \ \mathrm{polylog}(n))$, for the very natural setting of \emph{sparse graphs}, semi-streaming algorithms are useless, since with $O(n \ \text{polylog}(n))$ space one can store the entire input graph (all arriving edges), and so one can trivially solve any graph problem.

Another, central approach to address the linear space lower bounds for graph streaming problems that recently received increasing attention is the \emph{random order streaming model}, where the edges arrive in random order, \ie in the order of a uniformly random permutation of the edges (see, e.g., \cite{CCM08:lower,KKS14:matching,KMM12:matching,McG14:stream,MMPS17,PS18}). The assumption about uniformly random or near-uniformly random ordering is very natural and can arise in many contexts. One might also use the random order streaming model to justify the success of some heuristics in practice, even though there exists strong space lower bound for (the worst case of) the problem. Furthermore, some recent advances have shown that some problems that are hard for adversarial streams can be solved with small space in the random order model. For example, Konrad et al.\ \cite{KMM12:matching} gave single-pass semi-streaming algorithms for maximum matching for bipartite and general graphs with approximation ratio strictly larger than $\frac12$ in the random order semi-streaming model, where it is not known if such approximation is possible in the adversarial order model. Kapralov et al.\ \cite{KKS14:matching} gave a poly-logarithmic approximation algorithm in polylogarithmic space for estimating the size of maximum matching of an unweighted graph in one pass over a random order stream, which is impossible in the adversarial order model \cite{AKL17:estimating}. Finally, \cite{PS18} showed that in the random order streaming model, even with constant space, one can approximate the number of connected components of the input graph to within an additive error of $\varepsilon n$, the size of a maximum independent set in planar graphs to within a multiplicative factor of $1 + \varepsilon$, and the weight of a minimum spanning tree of a connected input graph with small integer edge weights to within a multiplicative factor of $1 + \varepsilon$. While these results demonstrate the strength of the random order streaming model, Chakrabarti et al.\ \cite{CCM08:lower} proved that $\Omega(n)$ space is needed for any single pass algorithm for graph connectivity in the random order streaming model, almost matching the optimal $\Omega(n \log n)$ space lower bound in the adversarial order model~\cite{SW15:tight}. Furthermore, there exists a $n^{1-O(\varepsilon)}$ space lower bound for approximating the number of connected components of additive error $\varepsilon n$ adversarial order streams~\cite{HP16:stream}. This poses a central open question in the area of graph streaming algorithms, of \emph{characterizing graph problems which can be solved with small, sublinear space in the random order streaming model}.

The main goal of our paper is to address this task and to enlarge the class of graph problems known to be solvable with \emph{small space} in the \emph{random order streaming} model in a \emph{single pass}. Our main focus is on the most challenging scenario: of achieving \emph{constant space}\footnote{\label{footnote-space}Throughout the entire paper, we will count the size of the \emph{space in words} (assuming that a single word can store any single ID of a vertex or of an edge), \ie space bounds have to be multiplied by $O(\log n)$ to obtain the number of bits used. With this in mind, we use term \emph{constant space} to denote space required to store a constant number of words, or IDs, that is, $O(\log n)$ bits.}.

\paragraph{Property testing.}
A fundamental task in the study of big networks/graphs is to efficiently analyze their structural properties. For example, we may want to know if a graph is well-connected, has many natural clusters, has many copies (instances) of some specific sub-structures, etc. Given that modern networks are large, often consisting of millions and billions of nodes (web graph, social networks, etc.), the task of analyzing their structure has become recently more and more challenging, and the running-time efficiency of this task is becoming of critical importance. The framework of \emph{property testing} has been developed to address some of these challenges, aiming to trade the efficiency with the accuracy of the output, with the goal of achieving very fast algorithms.

In (graph) property testing, a tester has query access to a graph (e.g., random vertices or neighbors of a vertex for graphs), and its goal is to determine if the graph satisfies a certain property (e.g., is well-clusterable) or is far from having such a property (e.g., is ``far'' from any graph being well-clusterable; see, e.g., \cite{Gol17,GGR98:testing,GR02:testing,RS96:robust}). To be precise, we define testers as follows. Given a property $\Pi$, a tester for $\Pi$ is a (possibly randomized) algorithm that is given a proximity parameter $\varepsilon$ and oracle access to the input graph $G$. If $G$ satisfies property $\Pi$, then the algorithm must accept with probability at least $\frac23$. If $G$ is $\varepsilon$-far from $\Pi$, then the algorithm must reject with probability at least $\frac23$. If the algorithm is allowed to make an error in both cases, we say it is a \emph{two-sided error tester}; if, on the contrary, the algorithm always gives the correct answer when $G$ satisfies the property, we say it is a \emph{one-sided error tester}. Further details of the model depend on the data representation. In the main model considered in this paper, \emph{property testing for general graphs}, we will consider the \emph{random neighbor oracle access} to the input graph (cf. \cref{def:random_neighbor}), which allows to query a random neighbor of any given vertex\footnote{Our model is in contrast with the other two widely used property testing models for graphs with arbitrarily large maximum degree: In the \emph{adjacency list model}~\cite{PR02,KY14:forest}, the algorithm can perform both \emph{neighbor queries} (i.e., for the $i$-th neighbor of any vertex $v$ such that $i\le \deg(v)$), and the \emph{degree queries} (i.e., for the degree $\deg(v)$ of any vertex $v$); In the \emph{general graph model}, the algorithm is allowed to perform \emph{vertex-pair queries} (i.e., for the existence of an edge between any two vertex pair $u,v$), in addition to neighbor and degree queries~\cite{KKR04,AKKR08,Gol17}. Still, we believe that the \emph{random neighbor oracle model} considered in this paper is the most natural model of computations in the property testing framework in the context of very fast algorithms, especially those performing $O(1)$ queries. We note however, that our analysis can be generalized to other models of general graphs, for example, see \cref{sec:other-models}.}. In our model, we will say that $G$ is $\varepsilon$-far from a property $\Pi$ if any graph that satisfies $\Pi$ differs from $G$ on at least $\varepsilon |E(G)|$ edges. To analyze the performance of a tester, we will measure its quality in term of its \emph{query complexity}, which is the number of oracle queries it makes.

In the past a large body of research has focused on the analysis of various graph properties in different graph models, for example, leading to a precise characterization of all properties that can be tested with constant query complexity \cite{AFNS09,AS08} in the so-called dense model (graphs with $\Theta(n^2)$ edges), and some partial results for bounded-degree graph models (see, e.g., \cite{BSS10:minor,CSS09:hereditary,FPS19:testable,Gol17,GR02:testing,GR11:proximity,NS13:hyperfinite}). However, our understanding of the model of general graphs, graphs where each vertex can have arbitrary degree, is still rather limited. We have seen some major advances in testing graph properties for general graphs, including the results of Parnas and Ron \cite{PR02}, Kaufman et al.\ \cite{KKR04}, Alon et al.\ \cite{AKKR08}, Czumaj et al.\ \cite{CMOS11}. %
The main challenge of the study in the model of general graphs is a lack of good characterization of testable properties and of a good algorithmic toolbox for the problems in this model.
However, the importance of the general graph model and lack of major advances have been widely acknowledged in the property testing community. For example, it is recognized that the general graph model is ``most relevant to computer science applications'' and ``designing testers in this model requires the development of algorithmic techniques that may be applicable also in other areas of algorithmic research'' (see \cite[Chapter~10.5.3]{Gol17}).

\subsection{Basic Definitions and Overview of Our Results}

In this paper, we extend the approach recently introduced by Monemizadeh et al.\ \cite{MMPS17} (see also \cite{PS18}) to demonstrate a \emph{close connection between streaming algorithms and property testing} in the most general setting of \emph{general graphs}. Monemizadeh et al.\ \cite{MMPS17} show that for \emph{bounded degree graphs}, any graph property that is constant-query testable property in the adjacency list model can be tested with constant space in a single pass in random order streams.

As we mentioned, the query oracle model for general graphs we are considering is the \emph{random neighbor model}, which allows the algorithm to query a random neighbor of any specified vertex (cf. \cref{def:random_neighbor}). We have the following definitions of testing graph properties.

\begin{definition}[\textbf{Property testers in random neighbor model}]
\label{def:property-testers}
Let $\Pi = (\Pi_n)_{n \in \setn}$ be a graph property, where $\Pi_n$ is a property of graph of $n$ vertices. We say that \emph{$\Pi$ is testable with query complexity $q$}, if for every $\varepsilon$ and $n$, there exists an algorithm (called \emph{tester}) that makes at most $q = q(n, \varepsilon)$ oracle queries, and with probability at least $\frac23$, accepts any $n$-vertex graph satisfying $\Pi$, and rejects any $n$-vertex graph that is $\varepsilon$-far from satisfying $\Pi$. If $q = q(\varepsilon)$ is a function independent of $n$, then we call $\Pi$ \emph{constant-query testable}. If the tester always accepts graphs that satisfy $\Pi$, we say that it has \emph{one-sided error}. Otherwise, we say the tester has \emph{two-sided error}.
\end{definition}

We notice that the definition above is generic and can be applied to any of the query oracle models (see e.g.~\cite{Gol17}). However, since our main query oracle model is the random neighbor model, only for that model we will use the terminology from \cref{def:property-testers} without a direct reference to the query oracle model. We first give canonical testers in this model. In order to do so, we introduce a process called \emph{$q$-random BFS} ($q$-RBFS) starting with any specified vertex $v$, \ie a BFS of depth $q$ that is restricted to visiting at most $q$ random neighbors for every vertex (see \cref{def:randm-bfs}). We call the subgraph obtained by a $q$-RBFS a \emph{$q$-bounded disc}. Our first result is informally stated as follows.

\begin{theorem}[informal; cf. \cref{thm:canonical-tester}]\label{thm:informal_cano}
If a property $\Pi=(\Pi_n)_{n \in \setn}$ is testable with $q=q(\varepsilon)$ queries in the random neighbor model, then it can also be tested by a canonical tester that
\begin{enumerate}
\item samples $q'$ vertices;
\item performs $q'$-RBFS from each sampled vertex;
\item accepts if and only if the explored subgraph does not contain any (forbidden) graph $F\in \mathcal{F}$,
\end{enumerate}
where $q'$ depends only on $q$, and $\mathcal{F}$ is a family of rooted graphs such that each graph $F \in \mathcal{F}_{n}$ is the union of $q'$ many $q'$-RBFS bounded discs.
\end{theorem}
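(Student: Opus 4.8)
The plan is to follow the classical canonicalization strategy from the dense-graph setting (Goldreich–Trevisan) and its bounded-degree adaptation (Goldreich–Ron, Czumaj–Peng–Sohler), but adapted to the random neighbor oracle and the $q$-RBFS exploration primitive. First I would show that any tester making $q$ queries in the random neighbor model can be turned into a \emph{non-adaptive} tester with a bounded blow-up in query complexity: since each query reveals only a (uniformly random) neighbor, an adaptive tester's decision tree has bounded branching, so we can afford to simulate all branches by sampling enough random neighbors up front. Concretely, replace each query by running a $q$-RBFS from the relevant vertex; this explores a random rooted subgraph (a $q$-bounded disc) whose size is bounded by a function of $q$ alone. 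The key observation is that the distribution of the explored structure in the original tester is a deterministic function of a collection of $q$-bounded discs rooted at the sampled/encountered vertices, so sampling $q' = \text{poly}(q)$ vertices uniformly at random and running $q'$-RBFS from each dominates (in the sense of statistical coupling) whatever the original tester could have seen.

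Next I would argue that the canonical tester can be taken to be \emph{comparison-based} / label-oblivious: since graph properties are invariant under vertex relabeling and the random neighbor oracle treats vertices symmetrically, the acceptance decision depends only on the isomorphism type of the union of the explored $q'$-bounded discs, not on the actual vertex names. This lets me replace the arbitrary decision rule of the simulated non-adaptive tester by a rule that only inspects the isomorphism class of the explored subgraph $H$. Then comes the standard "majority vote / forbidden-subgraph" reformulation: for each isomorphism type $H$ of a union of $q'$ many $q'$-RBFS discs, define a random variable for whether the simulated tester accepts when it sees $H$; amplify the success probability by repetition so that the accept/reject decision is, with high probability, the same as the majority behavior; and then define $\mathcal{F}$ to be the collection of those rooted graphs $F$ (unions of $q'$ discs) on which the amplified simulated tester rejects with overwhelming probability. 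The canonical tester accepts iff the explored subgraph contains no $F \in \mathcal{F}$ as (an appropriately rooted) sub-configuration.

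The correctness of this reformulation has two directions. For completeness, if $G \in \Pi_n$, then the original tester accepts with probability $\ge 2/3$, hence after amplification the simulated tester almost surely accepts on the actually-explored subgraph, which means (with high probability) that subgraph contains no $F \in \mathcal{F}$ — so the canonical tester accepts. For soundness, if $G$ is $\varepsilon$-far from $\Pi_n$, the original tester rejects with probability $\ge 2/3$; I need to translate "the tester rejects" into "the explored subgraph contains some forbidden $F$", which is where the definition of $\mathcal{F}$ (everything the amplified tester rejects on) does the work, together with the fact that the canonical tester's $q'$-RBFS exploration stochastically dominates the original tester's exploration so it is at least as likely to hit a forbidden configuration.

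I expect the main obstacle to be the first step: handling high-degree vertices in the random neighbor model. Unlike the bounded-degree case, a single vertex may have arbitrarily many neighbors, so a $q$-RBFS sees only a tiny random sample of a vertex's neighborhood, and the original adaptive tester's behavior on repeated queries to the same high-degree vertex is governed by sampling-with-replacement statistics that must be faithfully reproduced. The delicate point is to show that sampling $q'$ \emph{global} vertices and running $q'$-RBFS from each yields a distribution over explored subgraphs that dominates the one induced by the original tester's at most $q$ queries — including the subtle correlations created when the tester revisits vertices or follows edges back toward already-seen high-degree hubs. Making the coupling precise, and pinning down the exact polynomial dependence of $q'$ on $q$ (accounting for the amplification factor needed to suppress the failure probability of the majority vote over the — possibly infinitely many, but effectively finitely many relevant — isomorphism types), is the technical heart of the argument; everything after that is the routine "forbidden subgraph" repackaging familiar from prior canonicalization proofs.
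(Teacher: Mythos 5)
Your proposal follows the same overall strategy as the paper---and the paper explicitly acknowledges it is adapting Goldreich--Trevisan / Goldreich--Ron / Czumaj--Peng--Sohler---but there is a substantive gap in the technical core. You frame the reduction from the original adaptive tester to the RBFS-based one as a \emph{stochastic domination / coupling} argument: run $q'$-RBFS, get a distribution that ``dominates'' whatever the original tester could have seen. That framing does not close the argument for a two-sided-error tester. Domination would let you say the canonical tester rejects at least as often as the original, but completeness requires the reverse direction too: on YES instances, the canonical tester must not over-reject, so you need the distribution of explored configurations to be (essentially) the \emph{same}, not merely dominating. The paper's key move, which your write-up omits, is an \emph{exact simulation}: the RBFS-based tester $\mathcal{T}_1$ builds a uniformly random relabeling permutation $\pi$ on-the-fly, mapping each of $\mathcal{T}$'s random-neighbor queries to a fresh pre-sampled neighbor $s_{u,i}$ from the RBFS exploration. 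Because the RBFS over-samples ($q'$ random neighbors per vertex, $q' = cq \ge$ the total number of queries), this simulation is distributionally identical to running $\mathcal{T}$ on $\pi(G)$, and isomorphism-invariance of $\Pi$ does the rest. This is what makes both directions of the argument go through, and also what lets one-sided error be preserved (a property your proposal does not address, though the theorem requires it).

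Two smaller points. First, the number of \emph{sampled starting vertices} is $q' = O(q)$, not $\mathrm{poly}(q)$ as you write; the query complexity blow-up to $q^{O(q)}$ comes from each RBFS itself exploring up to $q'^{q'}$ vertices, not from sampling more roots. Second, after the label-oblivious and order-oblivious averaging steps (your ``comparison-based'' step, the paper's $\mathcal{T}_2,\mathcal{T}_3$), the derandomization is a simple $\tfrac12$-threshold on the conditional acceptance probability given the explored isomorphism type (the paper's $\mathcal{T}_4$, following GT03 Lemma~4.4), rather than the ``amplify by repetition and take majority'' scheme you describe; you would still need to be careful that such amplification does not itself need more queries, since the averaging steps in the paper are re-uses of the same sample, not fresh samples. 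The forbidden family $\mathcal{F}_n$ is then just the set of configurations on which $\mathcal{T}_4$ rejects---matching your intuition for what $\mathcal{F}$ should be.
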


We remark that similar canonical testers have been given for dense graphs \cite{GT03:three}, bounded degree graphs and digraphs \cite{GR11:proximity,CPS16:testing}. Actually, our proof for the above theorem heavily builds upon \cite{GR11:proximity,CPS16:testing}, though our analysis requires some extensions to deal with general graphs (of possibly unbounded degree). To formally state our result regarding testing graph properties in streaming, we introduce the following definition.

\begin{definition}[\textbf{Property testers in the streaming model}]
Let $\Pi = (\Pi_n)_{n\in \setn}$ be a graph property, where $\Pi_n$ is a property of graph of $n$ vertices. We say that $\Pi$ is testable with space complexity $q$, if for every $\varepsilon$ and $n$, there exists an algorithm that performs a single pass over an edge stream of an $n$-vertex graph $G$, uses $q = q(n,\varepsilon)$ \emph{words} of space, and with probability at least $\frac23$, accepts $G$ satisfying $\Pi$, and rejects $G$ that is $\varepsilon$-far from satisfying $\Pi$. If $q = q(\varepsilon)$ is a function independent of $n$, then we call $\Pi$ \emph{constant-space testable}. If the tester always accepts the property, then we say that the property can be tested with \emph{one-sided error}. Otherwise, we say the tester has \emph{two-sided error}.
\end{definition}

Our main result and our main technical contribution is the \emph{transformation} of a one-sided error property tester in the random-neighbor model with constant \emph{query} complexity into a one-sided error property tester in the streaming model with constant \emph{space} complexity.

\begin{theorem}\label{thm:stream-test}
Every graph property $\Pi$ that is constant-query testable with one-sided error in the random neighbor model is also constant-space testable (space measured in words) with one-sided error in the random order graph streams.
\end{theorem}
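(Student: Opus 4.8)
The plan is to combine the canonical-tester characterization (\cref{thm:informal_cano}) with a streaming procedure that emulates the sampling and $q'$-RBFS exploration of a one-sided tester using only constant space. By \cref{thm:informal_cano}, a one-sided error tester for $\Pi$ can be assumed to be of the canonical form: sample $q'$ vertices, run a $q'$-RBFS from each, and reject if and only if the explored subgraph contains some forbidden configuration $F \in \mathcal{F}$ (a union of $q'$ many $q'$-RBFS bounded discs). Since the tester has one-sided error, whenever it rejects there is an actual witness $F$ present in $G$; so the streaming algorithm only needs to correctly detect such a witness with probability $\ge \frac23$ when $G$ is $\varepsilon$-far, and must never falsely detect one when $G \in \Pi$ (which is automatic, since any detected witness is genuinely a subgraph of $G$). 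Hence the whole task reduces to: in a single pass over a uniformly random edge stream, using constant space, find (with constant probability) a copy of some bounded-disc union that the off-line canonical tester would have found.

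The core of the streaming emulation is a sampling argument exploiting the random order of the stream. First I would show how to emulate a single $q'$-RBFS rooted at a uniformly random vertex. The key idea, following \cite{MMPS17,PS18}, is that in a random-order stream the first occurrence of each vertex is essentially a uniform random vertex, and — crucially for the random-\emph{neighbor} oracle — when processing edges incident to an already-``active'' vertex $v$, the next few edges at $v$ seen in the stream behave like (near-)uniform random neighbors of $v$ because of the random permutation. So I would maintain a small set of ``active'' partial RBFS trees: when the stream reveals an edge $\{u,v\}$, if $u$ is a root or an already-discovered node of some active tree that still needs more children at $u$'s level, we adopt $v$ as a child and recurse. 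The delicate points are (i) the RBFS wants up to $q'$ \emph{distinct} random neighbors per node, and the stream gives edges in a fixed (random) order, so one must argue that taking the next $q'$ stream-edges at $v$ is distributionally close to sampling $q'$ random neighbors — this is where high-degree vertices matter, since for a degree-$d$ vertex the stream positions of its edges are like a random subset, and with constant space we can afford to wait for a constant number of them; (ii) the whole exploration must be ``timed'' correctly: a root sampled too late in the stream may not have enough of the stream left to grow its tree to depth $q'$. I would handle (ii) by the standard trick of restricting roots to those whose first stream-appearance lies in an early constant fraction of the stream, and then running many independent such attempts in parallel (a constant number suffices), keeping only constant space for each because each partial disc has constant size.

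Putting it together: run $\Theta(1)$ independent trials; in each trial pick a root from the early part of the stream and grow a $q'$-RBFS using the subsequent stream edges as the random-neighbor answers, then repeat to collect $q'$ such discs; finally check whether their union contains a forbidden $F \in \mathcal{F}$, and reject iff it does. The space is $O(1)$ words since at any time we store $O(1)$ active partial discs each of $O(1)$ size, plus $O(1)$ counters. For correctness: the one-sided (perfect completeness) direction is immediate because a detected $F$ is a real subgraph, and $G \in \Pi$ contains no such $F$ (by one-sidedness of the canonical tester). For soundness, I would argue that the joint distribution of the $q'$ discs produced by the streaming emulation is close in total variation distance to the distribution of discs produced by the off-line canonical tester; since the latter rejects an $\varepsilon$-far $G$ with probability $\ge \frac23$, so does the former up to a small loss, which can be absorbed by adjusting constants (e.g.\ boosting via a constant number of repetitions and using the canonical tester with slightly smaller error parameter).

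The main obstacle I expect is faithfully simulating the random-neighbor oracle from a random-order stream in the presence of arbitrarily high-degree vertices. In the bounded-degree setting of \cite{MMPS17} each vertex has only $O(1)$ incident edges, so ``wait for all edges at $v$'' costs $O(1)$ space and one gets the exact neighbor set; here a high-degree vertex has too many edges to buffer, and we must instead argue that the \emph{first few} edges at $v$ appearing after $v$ becomes active are distributed like uniform random neighbors of $v$ — and that this remains (approximately) true conditioned on everything the algorithm has done so far, including the conditioning induced by having discovered $v$ via a particular earlier edge. Controlling these conditionings, and showing the accumulated total-variation error over all $O(1)$ oracle calls in all $O(1)$ discs stays below a small constant, is the heart of the proof; it is exactly the point where the earlier-cited canonization into \emph{bounded} discs (rather than arbitrary constant-query behaviour) is essential, since it guarantees only $O(1)$ oracle calls need to be emulated.
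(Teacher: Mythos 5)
Your high-level plan — canonical tester, then emulate the RBFS explorations in a random-order stream, then observe that one-sided error is automatically preserved because any detected witness is a genuine subgraph — matches the paper's skeleton, and the observations about detecting in the stream only when $G$ is $\varepsilon$-far, and about the timing of roots, are sound. However, there are two genuine gaps, one of which is fatal to the approach as written.

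The first gap is the total-variation claim. You assert that the joint distribution of the $q'$ discs produced by the streaming emulation is close in TV to the joint distribution of discs found by the canonical tester, and that this closeness can be driven below a small constant. This is not achievable, and the paper never attempts it. The streaming collection procedure (\textsc{StreamCollect}) greedily absorbs up to $q^{2q}$ edges per vertex from the stream, so it systematically sees \emph{more} than a $q$-RBFS would; the distributions of the resulting subgraphs are far apart. What the paper proves instead (\cref{lemma:boundconfiguration,lemma:streamproblower}) is only a one-sided inequality: for every colored disc $C$, $\Pr[\textsc{StreamCollect}\text{ contains }C]\ge \cst(q)\cdot \reach_G(v,C)$. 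A one-sided bound suffices for one-sided error testing, and in fact the paper points out (footnote near the ``Simulation in the streaming'' discussion) that anything stronger — approximating frequencies — is exactly what fails and is why the two-sided case remains open. Your proof must be reformulated around a constant-factor lower bound on detection probability, not TV closeness.

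The second, more subtle gap is the intersection problem, which you do not address at all and which is the central technical contribution of the paper. A forbidden subgraph $F\in\mathcal{F}_n$ can be the union of \emph{intersecting} $q$-bounded discs, and the way the discs overlap matters. Even if you could show that for each disc type $\Delta$ the stream finds as many $\Delta$'s as the RBFS does, this does not imply the stream finds $\Delta$'s that overlap in the right pattern to assemble an $F$: the stream might preferentially find non-overlapping copies of $\Delta$ while the RBFS tends to find overlapping ones (this is exactly the example in \cref{fig:whycolors}). The paper's fix is the refined canonical tester (\cref{thm:refined_characterization}): it identifies the small set $V_\alpha$ of vertices with non-negligible reach probability, proves these have linear degree and that two independent RBFS runs are edge-disjoint with good probability (\cref{thm:number-collisions,thm:collision-high-degree,thm:random-bfs-visit-edges}), colors $V_\alpha$ with $O(1)$ unique colors, and decomposes each $F$ into edge-disjoint \emph{colored} disc types that can be stitched back together by identifying equal colors. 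The stream-side analysis (\cref{lemma:lower_bound_disc_probability}) then lower-bounds, for each \emph{colored} disc type $\Delta$, the fraction of \textsc{StreamCollect} runs producing $\Delta$, and stitching via colors recovers some $F\in\mathcal{F}'_n$. Without tracking which vertices are the likely intersection points — the colors — your argument cannot conclude that the streaming-found discs actually assemble into a forbidden subgraph.
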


\paragraph{Applications.}
We believe that the main contribution of our paper is a general transformation presented in \cref{thm:stream-test}. However, we admit that the number of properties testable with one-sided error with a constant number of queries in the random neighbor model is rather limited. Still, we can apply our transformation to, for example, the property of being $(s,t)$-disconnected (i.e., there is no path between $s$ and $t$), see, e.g., \cite{YK12:disconnectivity}\footnote{The constant-query tester from \cite{YK12:disconnectivity} performs degree queries and neighbor queries, but it is straightforward to simulate it in the random neighbor model. Indeed, the algorithm in \cite{YK12:disconnectivity} only needs to repeatedly perform a constant-length random walks from $s$ and reject if only if one path from $s$ to $t$ is found. Such an algorithm can be trivially simulated in the random neighbor model.}. Furthermore, our transformation actually holds for \emph{any} restricted class of graphs (a promise on the structure of the input graph), including the class of planar, minor-free graphs, or the class of bounded degree graphs. Since bipartiteness in planar graphs (or minor-free graphs) is testable in the random neighbor model~\cite{CMOS11}, it is also one-sided error testable in random order streams; the same holds for testing $H$-freeness in planar or minor-free graphs \cite{CS19}. Furthermore, our techniques can also be used to transform any constant-query tester (with one-sided error) in the \emph{random neighbor/edge model} (cf. \cref{sec:other-models}) to the random order streaming model, where the random neighbor/edge model allows to sample an edge uniformly at random. Therefore, for example, since the property of being $P_k$-free (there is no path of length $k$) is constant-query testable in the random neighbor/edge model with one-sided error \cite{IY18:parameterized}, $P_k$-freeness is also constant-space testable with one-sided error in the random order graph streams. Similarly, it is not hard to see that the property of being $d$-bounded (%
the maximum degree is at most $d$) is constant-query testable in the random neighbor/edge model\footnote{If $G$ is $\varepsilon$-far from the property, then at least $\Omega(\varepsilon |E|)$ edges are incident to a node with degree at least $d+1$. Thus, we can simply sample a constant number of edges and check if either of its endpoints has degree at least $d+1$.}, and therefore this property too is constant-space testable with one-sided error in the random order graph streams.

\subsection{Challenges and Techniques}

The result about constant-space streaming algorithms for bounded-degree graphs by Monemizadeh et al.\ \cite{MMPS17} is obtained by noting that any constant-query complexity tester basically estimates the distribution of local neighborhoods of the vertices (see, e.g., \cite{CPS16:testing,Gol17,GR11:proximity}) and emulating any such algorithm on a random order graph stream using constant space. Unfortunately, this approach inherently relies on the assumption that the input graph is of bounded degree. This limitation comes from two ends: on one hand, there has not been known any versatile description of testers for constant-query testable graph properties of general graphs, and on the other hand, the streaming approach from \cite{MMPS17} relies on a breadth-first-search-like graph exploration that is possible (with constant space) only when the input graph has no high-degree vertices. A follow-up paper \cite{PS18} made the first attempt to address the challenge of dealing with general degrees, and considered some problems in which one can \emph{ignore} high degree vertices (e.g., for approximating the number of connected components or the size of a maximum independent set in planar graphs).

One important reason why the earlier approaches have been failing for the model of general graphs, without bounded-degree assumption, was our lack of understanding of constant-query time testers in general graphs and the lack of techniques to appropriately emulate off-line algorithms allowing many high-degree vertices. In this paper, we advance our understanding on both of these challenges.

\paragraph{A general and simple canonical tester.}
To derive a canonical tester for constant-query testable properties in the random neighbor model, we introduce the process \emph{$q$-random BFS} ($q$-RBFS): it starts from any specified vertex $v$, and then performs a BFS-like exploration of depth~$q$ that is restricted to visiting at most $q$ random neighbors at each step (see \cref{def:randm-bfs} for the formal definition). We call the subgraph obtained by a $q$-RBFS a \emph{$q$-bounded disc}. With the notion of $q$-RBFS and $q$-bounded discs, we are able to transform every constant-query tester for properties of general graphs into a \emph{canonical tester} that works as follows: it samples $q$ random vertices, performs a $q$-RBFS from each sampled vertex, and rejects \ifft/ the (non-induced) subgraph it has seen (which is a union of $q$-bounded discs) is isomorphic to some member of a family $\mathcal{F}$ of forbidden subgraphs (see \cref{thm:informal_cano,thm:canonical-tester}). Furthermore, such a canonical tester preserves one-sided error, while the query complexity blows up exponentially.

Canonical testers provide us a systematic view of the behavior of constant-query testers in the random neighbor model. They further tells us that in order to test a constant-query testable property $\Pi$, it suffices to estimate the probability that some forbidden subgraph in $\mathcal{F}$ is found by a $q$-RBFS starting from a randomly sampled vertex. Slightly more formally, we define the \emph{reach probability} of a subgraph $F\in\mathcal{F}$ to be the probability that a $q$-RBFS starting from a uniformly chosen vertex $v$ sees a graph that is isomorphic to $F$. In particular, if we can estimate these reach probabilities in random order streams, then we can also test $\Pi$ accordingly.

The problem with this approach is that it is hard to estimate the reach probabilities of subgraphs in $\mathcal{F}$. The main challenge here is that a forbidden subgraph $F \in \mathcal{F}_n$ may be the union of more than two or more subgraphs obtained from different $q$-RBFS that may intersect with each other.

\paragraph{A refined canonical tester.}
To cope with the challenge mentioned above of estimating the reach probabilities of subgraphs in $\mathcal{F}$, we decompose each forbidden subgraph $F \in \mathcal{F}_n$ into all possible sets of intersecting $q$-bounded discs whose union is $F$ and then try to recover $F$ from these sets. In order to recover $F$ from such a decomposition, we have to identify and monitor \emph{vertices that are contained in more than one $q$-bounded disc of $F$}.

We refine the analysis of the canonical tester and separate the $q$-bounded discs explored by each $q$-RBFS and keep track of their intersections (cf. \cref{thm:refined_characterization}). We first observe that for every input graph $G$ and every $\varepsilon$, there exists a \emph{small fixed set} $V_\alpha \subseteq V$ of all vertices whose probability to be visited by a random $q$-RBFS from a random vertex exceeds some small threshold $\alpha$ (depending on $q$ and $\varepsilon$, but independent of $n$). In other words, with constant probability, the subgraphs explored by multiple $q$-RBFS in the canonical tester will only overlap on vertices from $V_\alpha$. Furthermore, we prove that the degree of all vertices in $V_\alpha$ is at least linear (in $n$), and with constant probability, two random $q$-RBFS subgraphs will not share any edge. Since $V_\alpha$ has constant size, each $q$-bounded disc can be viewed as a \emph{colored} $q$-bounded disc \emph{type} such that each vertex in $V_\alpha$ is assigned a unique color from a constant-size palette. This way, it is possible to reversibly decompose each $F\in \mathcal{F}_n$ into a multiset of colored $q$-bounded disc types (actually, there may be many such multisets for each $F$): since the $q$-bounded discs that are explored by different $q$-RBFS intersect only at vertices in $V_\alpha$, $F$ is obtained by identifying vertices of the same color. See \cref{fig:stitching} for an example.

\begin{figure}
\centering
\begingroup%
  \makeatletter%
  \providecommand\color[2][]{%
    \errmessage{(Inkscape) Color is used for the text in Inkscape, but the package 'color.sty' is not loaded}%
    \renewcommand\color[2][]{}%
  }%
  \providecommand\transparent[1]{%
    \errmessage{(Inkscape) Transparency is used (non-zero) for the text in Inkscape, but the package 'transparent.sty' is not loaded}%
    \renewcommand\transparent[1]{}%
  }%
  \providecommand\rotatebox[2]{#2}%
  \newcommand*\fsize{\dimexpr\f@size pt\relax}%
  \newcommand*\lineheight[1]{\fontsize{\fsize}{#1\fsize}\selectfont}%
  \ifx\svgwidth\undefined%
    \setlength{\unitlength}{227.96786571bp}%
    \ifx\svgscale\undefined%
      \relax%
    \else%
      \setlength{\unitlength}{\unitlength * \real{\svgscale}}%
    \fi%
  \else%
    \setlength{\unitlength}{\svgwidth}%
  \fi%
  \global\let\svgwidth\undefined%
  \global\let\svgscale\undefined%
  \makeatother%
  \begin{picture}(1,0.56233427)%
    \lineheight{1}%
    \setlength\tabcolsep{0pt}%
    \put(0.02350654,0.31887553){\color[rgb]{0,0,0}\makebox(0,0)[lt]{\lineheight{0.5}\smash{\begin{tabular}[t]{l}u\end{tabular}}}}%
    \put(0.27612672,0.35764978){\color[rgb]{0,0,0}\makebox(0,0)[lt]{\lineheight{0.5}\smash{\begin{tabular}[t]{l}v\end{tabular}}}}%
    \put(0,0){\includegraphics[width=\unitlength,page=1]{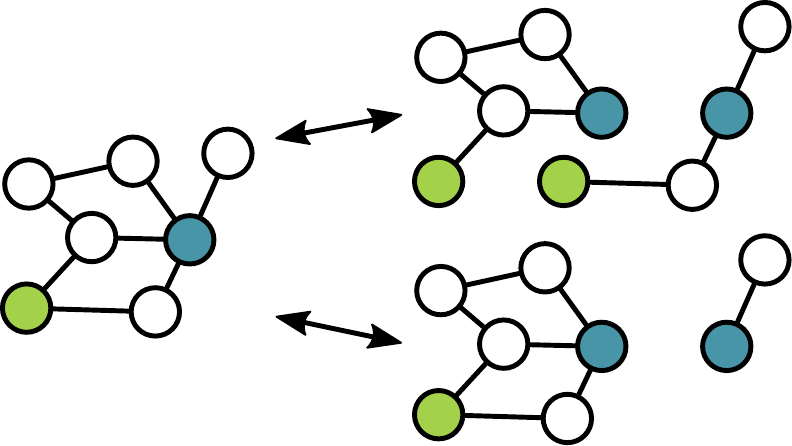}}%
  \end{picture}%
\endgroup%

\caption{\small Consider the graph on the left, which can be decomposed into colored $3$-bounded disc types (which are rooted at $u$ and $v$ in this example) in more than one way. However, it is always possible to recover the original graph by identifying vertices of the same color. Furthermore, every mapping is bijective because every color is assigned at most once per disc. If the colored vertices correspond to the vertices in $V_\alpha$, every forbidden graph $F \in \mathcal{F}_n$ from \cref{thm:canonical-tester} corresponds to a decomposition into edge-disjoint colored $q$-bounded discs $F' \in \mathcal{F}'_n$ in \cref{thm:refined_characterization}, which intersect only at colored vertices.}
\label{fig:stitching}
\end{figure}

These properties are crucial to describe the forbidden subgraphs in terms of the graphs seen by the $q$ many $q$-RBFS that the canonical tester performs and a \emph{constant-size} description of their interaction, \ie how they overlap.

\paragraph{Simulation in the streaming.}
In the streaming, in order to simulate $q$-RBFS, it is natural to consider the following procedure called \textsc{StreamCollect} ($q$-SC, see \cref{alg:collect-nbh-stream}) to explore the subgraph surrounding any specified vertex. That is, it maintains a connected component $C$ that initially contains only the start vertex. Whenever it reads an edge that connects to the current $C$ and the augmented component may be observed by a run of $q$-RBFS, it adds the edge to $C$.

Note that one important feature of random order streams is that we would see the right exploration (as in the query model) with constant probability, while it is challenging to verify if the subgraph we collected from the stream is indeed the right exploration (cf. \cite{MMPS17,PS18} for a more detailed discussion). In our setting, as we mentioned, another technical difficulty is to analyze whether subgraphs found by running the stream procedure multiple times \emph{intersect} in exactly the same way as the $q$-bounded discs that are found by $q$-RBFS.

With the refined canonical tester, which specifies how different $q$-RBFS procedures intersect, we are able to \emph{simulate one-sided error constant-query testers} in the random neighbor model for general graphs in the \emph{random order streaming model}. Since the considered property $\Pi$ is one-sided error testable in the random neighbor model, it suffices to detect a forbidden subgraph $F$ in the family $\mathcal{F}$ corresponding to $\Pi$ with constant probability. That is, it suffices to show that if the graph is far from having the property, then for any forbidden subgraph $H$ that can be reached by the canonical tester with probability $p$, it can also be detected by \emph{multiple} \textsc{StreamCollect} subroutines with probability at least $cp$ for some suitable constant $c$.\footnote{Note that this is not sufficient for simulating two-sided error testers. Let us take the property connectivity (which is $2$-sided error testable in random neighbor model) for example. If the input graph is a path on $n$ vertices, then a $q$-RBFS will detect a forbidden subgraph (i.e., a path of constant length that is not connected to the rest) corresponding to connectivity with small constant probability, while a $q$-SC might see a forbidden subgraph with high constant probability. That is, in order to test connectivity, we need to be able to approximate the \emph{frequencies} of the forbidden subgraphs, for which our current techniques fail.	
} %

In order to do so, we first decompose the forbidden subgraphs that characterize the property into colored subgraphs, where each subgraph corresponds to a run of $q$-RBFS and vertices in $V_\alpha$ are colored with a unique color. Then, we prove that for a sufficiently large sample of vertices, the $q$-SC subroutines starting from these sampled vertices will collect, for each colored subgraph $H$, at least as many instances of $H$ as the canonical property tester sees. Suppose that the input graph is far from the property. Since the subgraphs observed by the canonical tester intersect only at vertices in $V_\alpha$, \ie colored vertices, with constant probability, it is possible to stitch a forbidden subgraph by identifying vertices of the same color in the analysis.

The analysis of this procedure is two-fold. First, we show that if a single run of $q$-RBFS from $v$ sees a certain colored $q$-bounded disc type with probability $p$ (where the colored vertices are $V_\alpha$), then a single run of $q$-SC from $v$ sees this disc type with probability $cp$ for some suitable constant $c$ (see \cref{lemma:streamproblower}).

The second step (which is the main technical part) is to show that if the probability that a $q$-RBFS from a random vertex sees a colored $q$-bounded disc type $\Delta$ is $p$, then with constant probability, for a sufficiently large sample set $S$, the calls to $q$-SC from vertices in $S$ will also see a $q$-bounded disc type $\Delta$, even though there are intersections from different $q$-SCs (see \cref{lemma:lower_bound_disc_probability}). Then we can show that if the input graph is far from the property, with constant probability, we can stitch the colored $q$-bounded discs to obtain a forbidden subgraph $F\in\mathcal{F}$ (see \cref{thm:stream-test}).

Finally, we remark that colors are only used in the analysis as the streaming algorithm can identify intersections of multiple $q$-SC by the vertex labels. However, the colors are crucial to the analysis: without colors, we cannot guarantee that the $q$-bounded disc types found by multiple $q$-SCs can be stitched in the same way as the $q$-bounded disc types found by $q$-RBFS. Here is an example: Consider some constant-query testable property $\Pi$ such that the set of forbidden subgraphs $\mathcal{F}$ contains a graph $F$ that is not a subgraph of any single $q$-bounded disc type (i.e, it is the union of at least two intersecting $q$-bounded disc types). For the sake of illustration, a concrete example %
is provided in \cref{fig:whycolors}. In order to reject, the canonical property tester needs to find at least two intersecting $q$-bounded discs such that their union contains $F$ as a subgraph. However, even if we bound, for each \emph{uncolored} $q$-bounded disc type $\Delta$, the probability that $q$-SC finds $\Delta$ by some constant fraction of the probability that $q$-RBFS finds $\Delta$, this is not sufficient to conclude that the probability that multiple $q$-SCs find a copy of $F$ is bounded by a constant fraction of the probability that multiple $q$-RBFS find a copy of $F$. The reason is that $q$-SC might only find copies of $\Delta$ that are not intersecting, while $q$-RBFS might tend to find copies of $\Delta$ that intersect. Again, see \cref{fig:whycolors} for an example. Therefore, we need to preserve, for each $q$-bounded disc type $\Delta$, the information which of the corresponding vertices in the input graph are likely to be contained in more than one $q$-RBFS for the analysis.

\begin{figure}
\centering
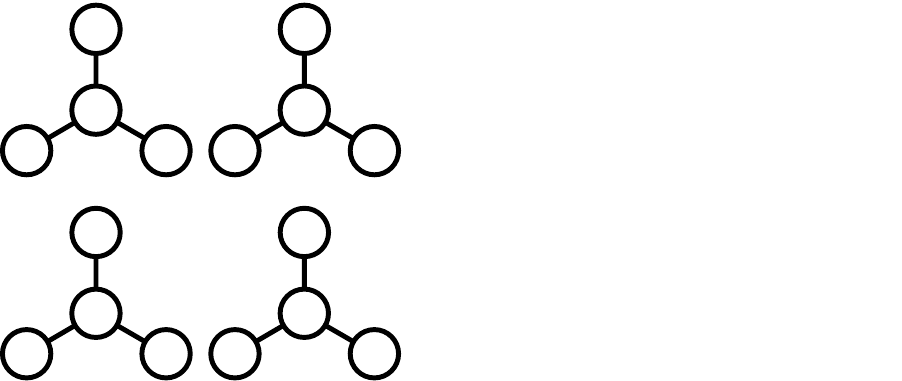
\caption{\small The above graph, which is composed of $3$-stars and a $\omega(1)$-star with root $z$ and which should be thought of as a subgraph of some larger graph, illustrates the need for colors in our analysis of the streaming property tester. Although the $2$-bounded discs of $u$, $v$ $x$ and $y$ are all $3$-stars (with constant probability over the randomness of the neighbor queries), exploring $u$ and $v$ by $q$-RBFS does not result in finding a $6$-star, while it is likely to find a $6$-star by exploring $x$ and $y$. Even if we prove that the probability that a $q$-SC finds uncolored $3$-stars is lower bounded by some constant fraction of the probability that $q$-RBFS finds uncolored $3$-stars, we still cannot rule out that $q$-SC might tend to find leaves of the small stars (like $u$ and $v$) while $q$-RBFS tends to find leaves of the big star (like $x$ and $y$). Observe that here, $z$ is the only vertex that is likely contained in two different $q$-RBFS due to its high degree.}
\label{fig:whycolors}
\end{figure}

\section{Preliminaries}
\label{sec:prelim}

Let $G = (V, E)$ be an undirected graph. We will assume that the vertex set $V$ of $G$ is $[n] = \{1, \dots, n \}$, and we let $\degr{v}$ denote the degree of $v \in V$. Sometimes, we use $V(G)$ to denote the vertex set $V$ of $G$ and $E(G)$ to denote the edge set $E$ of $G$. We let $\stream(G)$ denote the input stream of edges that defines $G$. In this paper, we consider streaming algorithms for random order streams, \ie the input stream $\stream(G)$ to our algorithm is drawn uniformly from the set of all permutations of $E$. We are interested in \emph{streaming algorithms that have constant space complexity} in the size of the graph, where we count the size of the space in words, \ie space bounds have to be multiplied by $O(\log n)$ to obtain the number of bits used, see also footnote~\ref{footnote-space}.

A graph $G$ is called a \emph{rooted} graph if at least one vertex in $G$ is marked as \emph{root}. Let us define the notion of a root-preserving isomorphism.

\begin{definition}%
\label{def:root-preserving-isomorphism}
Given two rooted graphs $H_1$ and $H_2$, a \emph{root-preserving isomorphism} from $H_1$ to $H_2$ is a bijection $f: V(H_1) \rightarrow V(H_2)$ such that
\begin{inparaenum}[(i)]
\item if $u$ is the root of $V(H_1)$ then $f(u)$ is the root of $V(H_2)$, and
\item that $(u,v)\in E(H_1)$ if and only if $(f(u),f(v))\in E(H_2)$.
\end{inparaenum}
If there is a root-preserving isomorphism from $H_1$ to $H_2$ then we say that $H_1$ is root-preserving isomorphic to $H_2$ and denote it by $H_1\simeq H_2$.
\end{definition}

\section{Canonical Constant-Query Testers in General Graphs}
\label{sec:canonincal-testers}

In this section, we present our main result on the \emph{canonical testers for constant-query testable properties in general graphs}. After starting with some basic definitions, we will present two canonical testers for constant-query testable properties in general graphs. Our first canonical tester is of a general form (see \cref{subsec:canonical-tester-general}) and our second tester (see \cref{thm:refined_characterization} in \cref{subsec:canonical-tester-refined}) is slightly more refined, allowing for a more natural use later in the setting of streaming algorithms in \cref{sec:testing-in-streams}.

We note that in this paper we focus on one specific model of access to the input graph, the \emph{random neighbor model}. It is possible to extend some of our analysis (of canonical testers) to some other graph access models, though (cf. \cref{sec:other-models}).

\subsection{Random BFS and Bounded Discs}

\paragraph{Property testing in query oracle model.}
Since we consider general graphs, without any bounds for vertex degrees, we have to carefully define the access provided to the input graph in the property testing framework. The access to the input graph is given by \emph{queries} to an \emph{oracle} representing the graph. There have been several oracles considered in the literature for general graphs, but our main focus is on the \emph{random neighbor model}, which we consider to be natural for graphs with unbounded degree, especially in the context of properties testable with a constant number of queries.

\begin{definition}[\textbf{Random neighbor model}]
\label{def:random_neighbor}
In the \emph{random neighbor model}, an algorithm is given $n \in \setn$ and access to an input graph $G=(V,E)$ by a query oracle, where $V = [n]$. The algorithm may ask queries based on the entire knowledge it has gained by the answers to previous queries. The \emph{random neighbor query} specifies a vertex $v \in V$ and the oracle returns a vertex that is chosen i.u.r. (independently and uniformly at random) from the set of all neighbors of $v$.
\end{definition}

Notice that in the random neighbor model, since $V = [n]$, the algorithm can also trivially select a vertex from $V$ i.u.r. We believe that the random neighbor model is the most natural model of computations in the property testing framework in the context of very fast algorithms (especially those of constant query complexity), and therefore our main focus is on that model. However, we want to point out that some of our results are sufficiently general to apply to a larger variety of the query oracle models, though we will not elaborate about it here (cf. \cref{sec:other-models}).

We describe the first canonical testers of all constant-query testers (in the random neighbor model) for general graphs, both, for one-sided and two-sided errors. With this canonization, we can model all graph properties testable with a constant number of queries using so-called \emph{canonical testers}; see \cref{thm:canonical-tester,thm:refined_characterization} for formal statements.%

To formalize our canonical testers for all constant-query testers in the random neighbor model, we will use the following two definitions of constrained random BFS-like graph exploration and of bounded discs.

We begin with the definition of a $q$-RBFS process, which starts at some vertex and explores its neighborhood in a BFS-like fashion, conditioned on a bound of the depth and the breadth of the exploration (see \cref{def:randm-bfs} for formal definition and \cref{alg:random-bfs} for the detailed implementation).

\begin{definition}[\textbf{$q$-random BFS}]
\label{def:randm-bfs}
Let $q > 0$ be an integer and $G$ be a simple graph. For any vertex $v \in V(G)$, the \emph{$q$-random BFS} (abbreviated as \emph{$q$-RBFS}) explores a random subset of the $q$-neighborhood of $v$ in $G$ iteratively as follows. First, it initializes a queue $Q = \{ v \}$ and a graph $H = (\{ v \}, \emptyset)$. Then, in every iteration, it pops a vertex $u$ from $Q$ and samples $q$ random neighbors $s_{u,1}, \ldots, s_{u,q}$ of $u$. For every edge $e = \{ u, s_{u,i} \}$, it adds $s_{u,i}$ and the directed edge $(u, s_{u,i})$ to $H$. Furthermore, if $s_{u,i}$ has distance less than $q$ from $v$ in $H$ and $s_{u,i}$ has not been added to $Q$ before, $s_{u,j}$ is appended to $Q$. When $Q$ is empty, all edges in $H$ are made undirected (without creating parallel edges) and $H$ is returned.
\end{definition}

\begin{algorithm}[!h]
	\caption{$q$-random BFS}
	\label{alg:random-bfs}
	\begin{algorithmic}
		\Function{RandomBFS}{$G, v, q${}}
		\State $Q \gets$ empty queue; $\textrm{enqueue}(Q,v)$
		\State $\forall w \in V: \ell[w] \gets \infty$
		\State $\ell[v] \gets 0$
		\State $H \gets (\{ v \}, \emptyset)$ with $v$ as root
		\While{$Q$ not empty}
		\State $u \gets$ pop element from $Q$
		\For{$1 \le i \le q$}
		\State $s_{u,i} \gets$ query oracle for random neighbor of $u$
		\State add vertex $s_{u,i}$ and edge $(u,s_{u,i})$ to $H$
		\If{$\ell[u] < q - 1 \wedge \ell[s_{u,i}] = \infty$}
		\State $\ell[s_{u,i}] \gets \ell[u] + 1$
		\State $\textrm{enqueue}(Q,s_{u,i})$
		\EndIf
		\EndFor
		\EndWhile
		\State \Return undirected $H$ without parallel edges
		\EndFunction
	\end{algorithmic}
\end{algorithm}

Any output of $q$-RBFS algorithms can be described in a static form using the concept of bounded discs.

\begin{definition}[\textbf{$q$-bounded disc}]
\label{def:bounded-disc}
For a given $q \in \setn$, graph $G = (V,E)$, and vertex $v \in V$, a \emph{$q$-bounded disc of $v$ in $G$} is any subgraph $H$ of $G$ that is rooted at $v$ and can be returned by \textsc{RandomBFS}$(G,v,q)$. In this case, vertex $v$ is called a \emph{root} of the $q$-bounded disc $H$ and the maximum distance from $v$ to any other vertex in $H$ is called the \emph{radius} of $H$.
\end{definition}

All $q$-bounded discs that are root-preserving isomorphic form an equivalence class.

\begin{definition}[\textbf{$q$-bounded disc type}]
\label{def:bounded-disc-type}
Let $H$ be a $q$-bounded disc. The equivalence class of $H$ with respect to $\simeq$, i.\ e., the existence of a root-preserving isomorphism (see \cref{def:root-preserving-isomorphism}), is called the $q$-bounded disc type of $H$.
\end{definition}

In the following \cref{subsec:canonical-tester-general}, we introduce the input model, property testing and random breadth-first search.

\subsection{Canonical Testers: A General Version}
\label{subsec:canonical-tester-general}

In the following, we present the proof of our first main result. We show that any tester with query complexity $q = q(\varepsilon,n)$ in the random neighbor model can be simulated by a \emph{canonical tester} that samples $q' = O(q)$ vertices and rejects if and only if the union of the subgraphs induced by the $q'$-RBFS from the sampled vertices belongs to some family of forbidden graphs.

\begin{theorem}[\textbf{Canonical tester}]
\label{thm:canonical-tester}
Let $\Pi = (\Pi_n)_{n \in \setn}$ be a graph property that can be tested in the random neighbor model with query complexity $q = q(\varepsilon,n)$ and error probability at most $\frac{1}{3}$. Then for every $\varepsilon$, there exists an infinite sequence $\mathcal{F} = (\mathcal{F}_{n})_{n \in \setn}$ such that for every $n \in \setn$,
\begin{itemize}
\item $\mathcal{F}_{n}$ is a set of rooted graphs such that each graph $F \in \mathcal{F}_{n}$ is the union of $q'$ many $q'$-bounded discs;
\item the property $\Pi_n$ on $n$-vertex graphs can be tested with error probability at most $\frac13$ by the following canonical tester:
\begin{enumerate}
\item sample $q'$ vertices i.u.r. and mark them roots;
\item for each sampled vertex $v$, perform a $q'$-RBFS starting at $v$;
\item reject if and only if the explored subgraph is root-preserving isomorphic to some $F \in \mathcal{F}_{n}$,
\end{enumerate}
\end{itemize}
where $q'=cq$ for some constant $c > 1$. The query complexity of the canonical tester is $q^{O(q)}$. Furthermore, if $\Pi = (\Pi_n)_{n \in \setn}$ can be tested in the random neighbor model with one-sided error, then the resulting canonical tester for $\Pi$ has one-sided error too, i.e., the tester always accepts graphs satisfying $\Pi$.
\end{theorem}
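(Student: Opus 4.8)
The plan is to follow the standard canonization recipe for property testers (as in Goldreich--Ron~\cite{GR11:proximity} and Czumaj--Peng--Sohler~\cite{CPS16:testing}), adapted to the random neighbor model on general graphs. First I would argue that an arbitrary $q$-query tester $A$ can be turned into a \emph{non-adaptive} tester at the cost of an exponential blow-up: instead of choosing its next query based on previous answers, the tester simply explores, from each of its starting vertices, \emph{all} possible continuations up to depth and breadth $q$ — this is exactly a $q$-RBFS — and then simulates $A$'s decision tree on the explored subgraph. Since $A$ makes at most $q$ queries, it touches at most $q$ distinct vertices as ``roots'' of exploration (the vertices it selects i.u.r. from $V$), and from each it explores a subgraph that is a $q$-bounded disc; taking $q' = cq$ for a suitable constant $c$ absorbs the bookkeeping (e.g., the tester may re-query the same vertex, may issue fresh i.u.r. vertex selections, etc.). The resulting non-adaptive tester reads a union of at most $q'$ many $q'$-bounded discs and then applies a (randomized) predicate to it.

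Second, I would make the tester \emph{canonical} by showing its decision can be assumed to depend only on the root-preserving isomorphism type of the explored subgraph, and moreover can be made deterministic given that subgraph. The randomness-removal step is the classical averaging argument: fix the tester's coin tosses to the best value (the one minimizing total error over the — finitely many, for fixed $n$ and $\varepsilon$ — relevant disc-union types), which preserves the $\frac13$ error bound on both YES- and NO-instances by an averaging/Yao-type argument; because the input to the predicate, after the non-adaptive reduction, is just the isomorphism type of the union of $q'$ discs, the deterministic predicate is a subset $\mathcal{F}_n$ of all such types. Declaring $\mathcal{F}_n$ to be exactly the set of union-types on which the deterministic predicate rejects yields the ``reject iff the explored subgraph is root-preserving isomorphic to some $F \in \mathcal{F}_n$'' form. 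Care is needed that a run of $q'$-RBFS always produces a valid $q'$-bounded disc and that the explored object is genuinely a union of $q'$ of them with the sampled vertices as roots; this is immediate from \cref{def:bounded-disc,def:randm-bfs}.

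Third, for the one-sided error claim: if $A$ has one-sided error, it never rejects a graph satisfying $\Pi$, i.e., for every $G \in \Pi_n$ the probability (over $A$'s internal randomness, including its i.u.r. vertex choices and the oracle's random-neighbor answers) that $A$ rejects is $0$. This means that for \emph{every} realizable transcript — equivalently, every realizable union of $q'$-bounded discs rooted at the sampled vertices — $A$'s decision tree accepts; hence no such union-type lies in $\mathcal{F}_n$, so the canonical tester never rejects $G$ either. (One must be a little careful that the non-adaptive simulation does not introduce spurious rejections: since it only explores \emph{more} than $A$ would and then faithfully replays $A$'s decision tree on the sub-transcript $A$ would actually have seen, the set of accept/reject outcomes is unchanged on each fixed realizable sub-transcript.) The completeness direction thus carries over verbatim, and the soundness direction ($\varepsilon$-far graphs are rejected with probability $\ge \frac13$) follows from the non-adaptive and randomness-removal steps, which preserve soundness up to the stated error bound.

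The main obstacle I anticipate is the bookkeeping in the non-adaptive simulation in the \emph{unbounded-degree} setting: unlike the bounded-degree case, a $q$-RBFS samples $q$ random neighbors at each vertex (rather than reading the whole neighborhood), so the ``explore everything $A$ could see'' step must carefully track that replaying $A$'s decision tree on a prefix of the $q$-RBFS transcript reproduces exactly the distribution $A$ would have induced — in particular, that repeated random-neighbor queries at the same vertex are handled consistently (a vertex popped once yields $q$ i.i.d. samples, and $A$ issuing up to $q$ such queries at a vertex is covered by one $q'$-RBFS expansion with $q' = cq$). The depth bound also interacts subtly with breadth: a vertex at $q$-RBFS-distance exactly $q$ from the root is added but not expanded, matching the fact that $A$'s $q$ queries can reach distance at most $q$. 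Getting the constant $c$ right so that a single $q'$-RBFS dominates everything $A$ can do from one sampled vertex, and ensuring the total number of sampled roots is at most $q'$, is the delicate part; the rest is the routine averaging argument. I would also note that the query complexity of the canonical tester is $q^{O(q)}$ because a $q'$-RBFS explores a $q'$-ary tree of depth $q'$, i.e., $\le (q')^{q'+1} = q^{O(q)}$ vertices, times $q'$ sampled roots.
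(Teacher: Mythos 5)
Your proposal follows the same overall route as the paper (reduce to a non-adaptive tester by exhaustive $q'$-RBFS exploration, pass to a predicate that depends only on the root-preserving isomorphism type of the explored subgraph, then derandomize that predicate to obtain the forbidden family $\mathcal{F}_n$), and the non-adaptive and label/order-obliviousness steps match the paper's chain $\mathcal{T}\to\mathcal{T}_1\to\mathcal{T}_2\to\mathcal{T}_3$ in spirit. The one-sided-error argument is also essentially the paper's: for $G\in\Pi_n$ the acceptance probability is exactly $1$ at every intermediate step, so no realizable union-type lands in $\mathcal{F}_n$.

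The one place where your write-up is not quite a proof is the derandomization of the final predicate. You say to ``fix the tester's coin tosses to the best value (the one minimizing total error over the\ldots relevant disc-union types)\ldots by an averaging/Yao-type argument.'' Fixing the coins to a single best value only preserves \emph{average-case} error; it does not give a per-instance $\frac13$ bound, which is what the theorem requires. The paper instead (i) first amplifies the original tester's error to $\frac16$ (this is where $q'=cq$ comes from), and then (ii) in $\mathcal{T}_4$, for each observed disc-union $H$ computes the conditional acceptance probability $p(H)$ of the order- and label-oblivious tester and \emph{thresholds}: accept deterministically iff $p(H)\ge\frac12$. Markov's inequality on $1-p(H)$ (for YES instances) and on $p(H)$ (for NO instances) then turns the $\frac16$ average error into a $\frac13$ per-instance error, and $\mathcal{F}_n$ is precisely the set of union-types on which $\mathcal{T}_4$ rejects. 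Without the initial amplification to $\frac16$ the threshold step would overshoot the $\frac13$ budget, and without thresholding on $p(H)$ the ``best coins'' heuristic does not yield a per-instance guarantee. Everything else in your outline, including the $q^{O(q)}$ query bound and the subtle point about the simulation of repeated random-neighbor queries at the same vertex, is correct and matches the paper.
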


\begin{proof}
Our proof follows the approach used earlier \cite{GR11:proximity,CPS16:testing}, though our analysis requires some extensions to deal with general graphs (of possibly unbounded degree).

Let $ \mathcal{T}$ be a tester for $\Pi_n$ on $n$-vertex graphs with error probability amplified to at most $ \frac{1}{6}$. Note that the query complexity of $\mathcal{T}$ is $q' = cq$ for some constant $c > 1$. We will first convert $\mathcal{T}$ into a tester $\mathcal{T}_1$ that samples a random subgraph $H$ of the input graph and answers all of $\mathcal{T}$'s queries using this subgraph. In particular, it samples $q'$ vertices and then returns the output on the basis of the subgraphs explored by all $q'$-RBFS that start at these vertices. Then, we convert $\mathcal{T}_1$ into a tester $\mathcal{T}_2$ whose output depends only on the edges and non-edges in the explored subgraph, the ordering of all explored vertices and its own coins. Next, we convert $\mathcal{T}_2$ into a tester $\mathcal{T}_3$ whose output is independent of the ordering of all explored vertices. Therefore, after sampling $H$, the probability that $\mathcal{T}_3$ accepts the input graph is equal for all query-answer sequences $((u_i) = v_i)_{i \in \onerange{q'}}$ that $\mathcal{T}$ may ask and observe. Finally, we convert $ \mathcal{T}_3$ into a tester $\mathcal{T}_4$ that returns the output deterministically according to this unlabeled version of $H$ where roots are marked identically.

Let \textbf{$\mathcal{T}_1$} be the tester that first samples a set $S_0$ of $q'$ vertices i.u.r. and then explores a subgraph by starting a $q'$-RBFS at each of vertices. We mark all vertices in $S_0$ as roots and denote the union of all subgraphs by $H=(V',E')$. We use $\mathcal{T}_1$ to simulate the execution of $\mathcal{T}$ in following way. Given a random neighbor query to an $n$-vertex graph $G=(V,E)$, the tester $\mathcal{T}_1$ will select on-the-fly a random, uniformly distributed permutation $\pi \colon V \rightarrow V$ and provide oracle access to the permuted graph $\pi(G) = (V, \pi(E))$, where $\pi(E) \defeq \{ (\pi(u), \pi(v) \mid (u,v) \in E \}$. Initially, all vertices in $S_0$ are considered unused. In the simulation, when $\mathcal{T}$ makes a query for a random neighbor of vertex $v$ and if $v$ has not appeared in any prior query or answer, then the tester $\mathcal{T}_1$ allocates $v$ to an unused vertex $u$ in the sample set $S_0$, and we let $\pi(v) = u$ and $u$ will be considered as used; otherwise $\mathcal{T}_1$ uses the allocation $ \pi(v)$ determined in the previous steps of the tester $\mathcal{T}_1$. To answer the query for a random neighbor of $v$, $\mathcal{T}_1$ selects $w = s_{\pi(v),{j+1}}$ (see \cref{alg:random-bfs}) where $j$ is the number of random neighbor queries of $v$ that $\mathcal{T}$ has issued so far. If $w$ has been selected as the selected as the image of some vertex in the permutation $\pi$ before, then $\mathcal{T}_1$ returns $\pi^{-1}(w)$; otherwise $ \mathcal{T}_1$ returns a random unused value (vertex label) $x$ and we let $\pi(x) = w$. If $w \in S_0$, then $w$ will be considered used. The returned values will then be fed into $\mathcal{T}$. The tester $\mathcal{T}_1$ makes the same decision as the final decision of $\mathcal{T}$ after receiving all the necessary query answers. It follows that
\begin{align*}
    \Pr[\mathcal{T}_1 \text{ correctly answers } G]
        &=
    \sum_{\pi} \Pr[\mathcal{T} \text{ correctly answers } \pi(G) \mid \pi] \cdot \Pr[\pi]
        \ge
    n! \cdot \frac{5}{6n!}
        =
    \frac56
    \fstop
\end{align*}

We make $\mathcal{T}_1$ label oblivious by defining the new tester \textbf{$\mathcal{T}_2$} to be the one that accepts $G$ with the \emph{average probability} that $\mathcal{T}_1$ accepts $G$ over the choice of $\pi$, \ie
\begin{align*}
    \Pr[\mathcal{T}_2 \text{ accepts } G \mid \pi]
        &=
    \sum_{\pi'} \Pr[\mathcal{T}_1 \text{ accepts } G \mid \pi'] \cdot \Pr[\pi']
        =
    \Pr[\mathcal{T}_1 \text{ accepts } G]
    \fstop
\end{align*}
Since it suffices to consider all labellings of $H$ and random coins of $\mathcal{T}$, this does not require any additional queries.

We make $\mathcal{T}_2$ oblivious of the order of $S_0$ and all $ s_{v,i}$ by considering the uniform distribution $\mathcal{U}$ over all permutations of elements in $S_0$ and all permutations of $(s_{v,i})_{i \in \onerange{q'}}$ for all $v$. In particular, we let the resulting tester \textbf{$\mathcal{T}_3$} accept with the average probability that $\mathcal{T}_2$ accepts $G$, where the probability is taken over the choice of $S \in \mathcal{U}$, \ie
\begin{align*}
    \Pr[\mathcal{T}_3 \text{ accepts } G \mid S, \pi]
        &=
    \sum_{S \in \mathcal{D}(S_0)} \Pr[\mathcal{T}_2\text{ accepts } G \mid S, \pi] \cdot \Pr[S|\pi]
        =
    \Pr[\mathcal{T}_2 \text{ accepts } G \mid \pi]
    \fstop
\end{align*}
Again, this does not require any additional queries because we only need to consider all permutations of sampled vertices and random coins of $\mathcal{T}$.

Let \textbf{$\mathcal{T}_4$} be the tester obtained from $\mathcal{T}_3$ that accepts (with probability $1$) the input graph \ifft/ the acceptance probability associated with the explored subgraph $H$ is at least $\frac12$. Since the acceptance probability of $ \mathcal{T}_3$ does not change when vertices are relabeled or $S$ is reordered, its depends only on $H$ up to isomorphism and its internal randomness. Similarly to the proof of Lemma~4.4 in \cite{GT03:three}, we can prove that $\mathcal{T}_4$ is a tester for $\Pi_n$ with error probability~$\frac13$.

Note that the decision of $\mathcal{T}_4$ is deterministic after $H$ is determined. We define $ \mathcal{F}_{n}$ to be the set of graphs that is a union of $q'$ many $q'$-bounded discs on which which the tester rejects.

Finally, let us observe that if the original tester for $\Pi = (\Pi_n)_{n \in \setn}$ can be tested in the random neighbor model with query complexity $q'$ with one-sided error, then all steps of our simulations ensure that the resulting canonical tester has one-sided error too. In particular, we have that $\Pr[\mathcal{T}_1 \text{ correctly answers } G] = 1$ for all $G \in \Pi_n$, and all the remaining steps maintain one-sided error because they do not decrease the acceptance probability for any graph $G$ that is accepted by $\mathcal{T}$ with probability at least $\frac{2}{3} < 1$.
\end{proof}

\subsection{Canonical Testers Revisited: Identifying Vertices in the Intersecting Discs}
\label{subsec:canonical-tester-refined}

\Cref{thm:canonical-tester} provides us a canonical way of testing constant-query testable properties (in the random neighbor model) by relating the tester to a set of forbidden subgraphs $\mathcal{F}_n$ for every $n \in \mathbb{N}$. However, as we mentioned in \cref{sec:introduction}, it is hard to directly use \Cref{thm:canonical-tester} to design and analyze our streaming testers due to the intersections of $q$-RBFS. In order to tackle this difficulty, we decompose each forbidden subgraph $F \in \mathcal{F}_n$ into all possible sets of intersecting $q$-bounded discs whose union is $F$. In order to recover $F$ from such a decomposition, we have to identify and monitor \emph{vertices that are contained in more than one $q$-bounded disc of $F$}.

\subsubsection{Identifying vertices with large reach probability}

In this section, we prove that with constant probability the $q$-bounded discs found by $q$-RBFS will only intersect on a \emph{small} set of vertices $V_\alpha$ and the discs will not intersect on any edge.

We begin with a useful definition on the probability of reaching a vertex from a $q$-RBFS.

\begin{definition}
\label{def:reach-probability}
For each vertex $v$, the \emph{reach probability $r(v) \defeq r_q(v)$} of $v$ is the probability that a $q$-RBFS starting at a uniformly randomly chosen vertex reaches $v$.
\end{definition}

In the following lemma, we give an upper bound on the size of the set of vertices with constant reach probability, which also implies that with constant probability, the number of vertices visited by at least two $q$-RBFS that the canonical tester performs is small. For any $\alpha$, $0 \le \alpha \le 1$, we let $V_\alpha := \{v \in V : r(v) \ge \alpha\}$. For a fixed $q$, let $c_j := \sum_{i=0}^{j} q^i = \frac{q^{j+1}-1}{q-1}$.

\begin{lemma}
\label{thm:number-collisions}
For any $0 < \alpha < 1$, it holds that $|V_\alpha| \le \frac{c_q}{\alpha}$.
\end{lemma}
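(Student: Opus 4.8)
The plan is to bound $|V_\alpha|$ by a simple counting argument based on the expected number of vertices that a single $q$-RBFS visits. First I would observe that a $q$-RBFS from any starting vertex $u$ explores a subgraph $H$ of radius at most $q$ in which every vertex dequeued from the queue contributes at most $q$ new sampled neighbors; unrolling the breadth-first layers, a run of $q$-RBFS visits at most $1 + q + q^2 + \cdots + q^q = c_q$ vertices, deterministically, regardless of the random choices. Hence if $X$ denotes the (random) set of vertices reached by a $q$-RBFS started at a uniformly random vertex, we always have $|X| \le c_q$, so $\Ex[|X|] \le c_q$.

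Next I would rewrite this expectation by linearity over vertices. For each $v \in V$, the probability that $v \in X$ is exactly the reach probability $r(v)$ by \cref{def:reach-probability}. Therefore
\begin{align*}
    c_q \ge \Ex[|X|] = \Ex\Bigl[\sum_{v \in V} \eval{v \in X}\Bigr] = \sum_{v \in V} \Pr[v \in X] = \sum_{v \in V} r(v) \ge \sum_{v \in V_\alpha} r(v) \ge \alpha \cdot |V_\alpha|
    \fstop
\end{align*}
Rearranging gives $|V_\alpha| \le c_q / \alpha$, which is the claim.

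The only point that needs a little care is the deterministic bound $|X| \le c_q$: one should check from \cref{def:randm-bfs,alg:random-bfs} that a vertex is appended to the queue $Q$ at most once (the label test $\ell[s_{u,i}] = \infty$ guarantees this), and that only vertices at distance at most $q-1$ from the root are ever enqueued, so the number of vertices ever enqueued is at most $1 + q + \cdots + q^{q-1}$, and each of them contributes at most $q$ further vertices to $H$, giving at most $1 + q + \cdots + q^q = c_q$ vertices in $H$ in total. There is essentially no obstacle here — the statement is a clean averaging bound — the only thing to be slightly careful about is to match the indexing of $c_q$ (namely $c_j = \sum_{i=0}^j q^i$) with the number of layers actually generated, but the bound $|X|\le c_q$ is certainly valid (and is all we need, since we want an upper bound on $|V_\alpha|$).
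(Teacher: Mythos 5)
Your proof is correct and follows essentially the same approach as the paper's: bound $\Ex[|X|]$ by $c_q$ (you note the slightly stronger fact that $|X|\le c_q$ holds deterministically), rewrite $\Ex[|X|]=\sum_v r(v)$ by linearity, and then restrict the sum to $V_\alpha$ to get $\alpha|V_\alpha|\le c_q$.
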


\begin{proof}
Let $p_u(v)$ be the probability that a $q$-RBFS starting at vertex $u$ discovers vertex $v$. Note that $r(v) = \frac{1}{n} \sum_u p_u(v)$.

Let $X_u$ denote the number of vertices in the subgraph explored by a $q$-RBFS starting at vertex $u$. Let $X$ denote the number of vertices in the subgraph explored by a $q$-RBFS starting at a vertex that is chosen i.u.r. from $V$. Note that $\E[X_u] = \sum_v p_u(v)$, and thus,
\begin{align*}
    \E[X]
        &=
    \frac{1}{n} \sum_u \Ex[X_u] =
    \sum_u \frac{1}{n} \sum_v p_u(v) =
    \sum_v\frac{1}{n} \sum_u p_u(v) =
    \sum_v r(v)
        \fstop
\end{align*}

We observe that a $q$-RBFS starting at an arbitrary vertex $u$ explores at most $q^i$ vertices at distance $i$ from $u$, which gives that $\Ex[X_u] \le \sum_{i=0}^{q} q^i = c_q$, and hence also $\Ex[X] \le c_q$.

Recall that $V_\alpha = \{v: r(v) \ge \alpha\}$. We have that		
\begin{equation*}
	c_q \ge \Ex[X] =
    \sum_v r(v) \ge
    \sum_{v \in V_\alpha} r(v) \ge
    |V_\alpha| \cdot \alpha
        \enspace,
\end{equation*}
which concludes the lemma.
\end{proof}

We further show that with high probability, two $q$-RBFS starting from vertices chosen i.u.r.\ will not share an edge (i.e., will not visit the same edge).

\begin{lemma}
\label{thm:random-bfs-visit-edges}
Let $0 < \alpha \le 1$. Let $n \ge \frac{qc_q}{\alpha^2}$. Let $u,v$ be two randomly chosen vertices. Let $H_u$ and $H_v$ denote the subgraphs visited by two $q$-RBFS starting at $u$ and $v$, respectively. Then with probability $1 - q c_q \cdot  2 \alpha$, no edge will be contained in both $H_u$ and $H_v$.
\end{lemma}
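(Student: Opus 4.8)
The plan is to bound the probability that the two explorations share an edge by a union bound over all edges, where for each edge $e = \{a,b\}$ we control the probability that $e \in H_u$ and the probability that $e \in H_v$ separately, using the independence of the two starting vertices $u$ and $v$. The key observation is that if an edge $e = \{a,b\}$ is visited by a $q$-RBFS, then at least one of its endpoints, say $a$, must have been \emph{enqueued} (reached) by that $q$-RBFS; hence $\Pr[e \in H_u] \le \Pr[a \in H_u] + \Pr[b \in H_u] = r(a) + r(b)$, and similarly for $v$. So the probability that a \emph{fixed} edge $e$ lies in both $H_u$ and $H_v$ is at most $(r(a)+r(b))^2$ by independence of $u$ and $v$.

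Next I would split the edge set according to whether its endpoints lie in $V_\alpha$ or not. If $e=\{a,b\}$ has both endpoints outside $V_\alpha$, then $r(a), r(b) < \alpha$, so $\Pr[e \in H_u \cap H_v] \le (2\alpha)^2 = 4\alpha^2$; summing over all $m = |E|$ such edges is still too crude, so instead I would bound the expected number of shared edges by a different route: condition on the exploration $H_u$ first. Given $H_u$, it contains at most $q c_q$ edges (since a $q$-RBFS explores at most $\sum_{i=0}^{q} q^i = c_q$ vertices and queries $q$ neighbors for each, giving at most $q c_q$ edges). For each such edge $e = \{a,b\}$, the probability (over the independent choice of $v$ and the randomness of the second $q$-RBFS) that $e \in H_v$ is at most $r(a) + r(b)$. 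If neither endpoint is in $V_\alpha$ this is at most $2\alpha$; if one endpoint, say $a$, is in $V_\alpha$, we use the additional fact — which I would extract from the companion argument (the statement preceding this lemma notes $\deg(v) \ge$ linear for $v \in V_\alpha$, and $n \ge q c_q/\alpha^2$) — to argue that such high-degree vertices are nonetheless unlikely to have a \emph{specific} incident edge traversed by the second exploration, again bounded crudely by $2\alpha$ after accounting for the bound on $|V_\alpha| \le c_q/\alpha$ from \cref{thm:number-collisions}.

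Putting it together: conditioned on $H_u$, the expected number of edges of $H_u$ that also appear in $H_v$ is at most $(\text{number of edges in } H_u) \cdot \max_e \Pr[e \in H_v] \le q c_q \cdot 2\alpha$. Taking expectation over $H_u$ leaves the same bound, and by Markov's inequality the probability that at least one edge is shared is at most $q c_q \cdot 2\alpha$, which is exactly the claimed bound $1 - q c_q \cdot 2\alpha$ for the complementary event. The role of the hypothesis $n \ge q c_q/\alpha^2$ is to ensure the "random unused value" resampling in the $q$-RBFS simulation does not distort these marginal probabilities — i.e., that the two independent $q$-RBFS from $u$ and $v$ genuinely behave as independent samples and the $r(\cdot)$ values are the correct marginals even after conditioning on the first exploration having revealed $\le q c_q$ vertices.

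\textbf{Main obstacle.} The delicate point is handling edges incident to $V_\alpha$: a vertex $a \in V_\alpha$ can be reached by the second $q$-RBFS with constant (non-vanishing) probability, so the naive bound $\Pr[e \in H_v] \le r(a)+r(b) \le 1 + \alpha$ is useless for such edges. The fix is that once $a \in H_v$, the second exploration still picks only $q$ of $a$'s neighbors uniformly at random, and since $\deg(a)$ is linear in $n$ (as asserted for vertices in $V_\alpha$) while $H_u$ pins down only $\le c_q$ candidate neighbors, the chance that one of $a$'s $\le c_q$ edges in $H_u$ is among the $q$ sampled is $O(q c_q / n) = O(\alpha^2)$ by the hypothesis on $n$ — negligible compared to the $2\alpha$ budget. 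Making this conditioning rigorous (the second $q$-RBFS's neighbor samples at $a$ are independent of which edges $H_u$ happened to select) is where I expect the real work to lie; everything else is a union bound and Markov.
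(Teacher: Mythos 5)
Your proposal matches the paper's proof in its essential structure: condition on the at most $qc_q$ edges of $H_u$, bound the probability that each is traversed by the second exploration by $2\alpha$ via a case split on which endpoints lie in $V_\alpha$ (using the degree lower bound $\deg(a) \ge n\alpha/c_q$ from \cref{thm:collision-high-degree} for the $V_\alpha$ endpoints), and then union bound. One misconception worth correcting: the hypothesis $n \ge qc_q/\alpha^2$ is \emph{not} about making the two explorations' marginals behave independently (they already are independent runs of $q$-RBFS on the fixed graph $G$); it is used purely to make the per-edge degree bound $q/\deg(a) \le qc_q/(n\alpha) \le \alpha$ small enough, and the cardinality bound $|V_\alpha| \le c_q/\alpha$ that you cite is not actually needed in this lemma.
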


In order to prove the above lemma, we first show that vertices with a large reach probability have large degree (that is linear in $n$). %

\begin{lemma}
\label{thm:collision-high-degree}
Let $0 < \alpha \le 1$. It holds that for any $v\in V_\alpha$, $\deg(v)\ge \frac{n\alpha}{c_q}.$
\end{lemma}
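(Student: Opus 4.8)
The plan is to prove the following quantitative statement for \emph{every} vertex $v$ with $\deg(v) \ge 1$: $r(v) \le \frac{c_q \deg(v)}{n}$. Once this is shown, the lemma follows immediately, since $v \in V_\alpha$ means $r(v) \ge \alpha$, so $\alpha \le \frac{c_q\deg(v)}{n}$, which rearranges to $\deg(v) \ge \frac{n\alpha}{c_q}$. (The degenerate case $\deg(v)=0$ is disposed of in one line: then $r(v)=\frac1n$, so $v$ can only lie in $V_\alpha$ when $\alpha \le \frac1n$, a regime we may ignore, as in all applications $\alpha$ is a constant and $n$ is large.) Following the notation used in the proof of \cref{thm:number-collisions}, let $p_u(v)$ be the probability that a $q$-RBFS started at $u$ reaches $v$, so that $r(v) = \frac1n\sum_u p_u(v)$; set $\rho(v) := \sum_u p_u(v) = n\,r(v)$. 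The key idea is to split a reach of $v$ according to the \emph{discovery depth} at which $v$ first enters the explored subgraph: define the discovery depth of the start vertex to be $0$, and the discovery depth of any other reached vertex $w$ to be $\ell[w]+\mathbf{1}$, i.e. one plus the discovery depth of the first vertex that samples $w$ during the exploration (see \cref{alg:random-bfs}). Since a $q$-RBFS explores a subgraph of radius at most $q$, every reached vertex has a well-defined discovery depth in $\{0,\dots,q\}$. For $j \in \{0,\dots,q\}$ let $\rho_j(v) := \sum_u \Pr[\,q\text{-RBFS from }u\text{ discovers }v\text{ at depth }j\,]$, so that $\rho(v) = \sum_{j=0}^{q}\rho_j(v)$.

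Next I would establish two facts. First, $\rho_0(v) = 1$: the only start $u$ that discovers $v$ at depth $0$ is $u = v$. Second, the recursion $\rho_j(v) \le \sum_{w\,:\,\{w,v\}\in E}\frac{q}{\deg(w)}\,\rho_{j-1}(w)$ for every $j \ge 1$. To justify this, observe that if a $q$-RBFS from $u$ discovers $v$ at depth $j\ge 1$, then $v$ must have been sampled by some neighbour $w$ of $v$ that was itself discovered at depth $j-1$; since $j-1 \le q-1$, such a $w$ is necessarily enqueued and hence processed. Crucially, the $q$ neighbours sampled when $w$ is processed are drawn \emph{after}, and hence independently of, everything that determines the event ``$w$ is discovered at depth $j-1$'' --- this event is measurable with respect to the samples taken at vertices of discovery depth at most $j-2$, all of which precede $w$ in the queue order (a principle of deferred decisions for the RBFS, which I would state explicitly as a small observation about \cref{alg:random-bfs}). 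Therefore, conditioned on that event, $v$ lies among $w$'s $q$ i.u.r.\ samples with probability at most $q/\deg(w)$. A union bound over the at most $\deg(v)$ candidate neighbours $w$, followed by summing over the start vertex $u$, yields the recursion.

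From here the bound follows by a short induction on $j$: I claim $\rho_j(v) \le q^j\deg(v)$ for every vertex $v$ with $\deg(v)\ge 1$ and every $j \in \{0,\dots,q\}$. The base case is $\rho_0(v) = 1 \le \deg(v) = q^0\deg(v)$, and the inductive step is $\rho_j(v) \le \sum_{w:\{w,v\}\in E}\frac{q}{\deg(w)}\,\rho_{j-1}(w) \le \sum_{w:\{w,v\}\in E}\frac{q}{\deg(w)}\cdot q^{j-1}\deg(w) = q^j\deg(v)$. Summing over $j$ gives $\rho(v) = \sum_{j=0}^{q}\rho_j(v) \le \deg(v)\sum_{j=0}^{q}q^j = c_q\deg(v)$, that is, $r(v) = \rho(v)/n \le \frac{c_q\deg(v)}{n}$, which is exactly what we needed.

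The only genuine obstacle is the independence claim underlying the recursion: one must argue carefully that the $q$ random neighbours drawn when a vertex $w$ is processed are fresh i.u.r.\ samples even after conditioning on the event that the exploration reaches and processes $w$ at discovery depth $j-1$. I expect this to be the main place where one has to be precise rather than hand-wavy, and the cleanest route is to formalize the RBFS as drawing its randomness lazily (one batch of $q$ neighbours per vertex, read only when that vertex is popped) and to note that the discovery-depth structure is determined strictly by batches drawn at shallower vertices; everything else in the argument is a routine union bound and induction.
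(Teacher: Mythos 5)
Your proof is correct and follows essentially the same route as the paper's: both bound $r(v)$ by decomposing the reach probability according to the depth at which $v$ is first discovered, prove the per-depth bound $q^j\deg(v)/n$ by induction on $j$ (the paper phrases this as a conditional claim about the layers $S_j$, you phrase it unconditionally by summing $\rho_j(v)$ over start vertices, which is arguably cleaner and you also handle the isolated-vertex edge case more explicitly), and then sum the geometric series to get $c_q\deg(v)/n$. One small slip: the formula ``$\ell[w]+1$'' for the discovery depth of $w$ should read $\ell[w]$, since in \cref{alg:random-bfs} the label $\ell[w]$ is already set to one more than the label of the vertex that first samples $w$; your prose definition (``one plus the discovery depth of the first vertex that samples $w$'') is the intended and correct one.
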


\begin{proof}
Let $H=(S,E(H))$ be the subgraph explored by a $q$-RBFS starting at a vertex that is chosen uniformly at random from $V$. For each $0 \le i \le q$, we let $S_i \subseteq S$ denote the set of vertices at distance exactly $i$ from the root of $S$.
	
For any $v \in V$, let $p(v)$ be the probability that $v$ is contained in $S$. We have the following claim (the proof is given subsequent to this one).
	
\begin{claim}
\label{ref:claim-who-knows-what-for}
Let $0\le i\le q$. For every non-isolated 	vertex $v \in V$, conditioned on the event that $v$ is not contained in $\cup_{j\le i-1} S_j$, the probability that $v$ is contained in $S_i$ is at most $\frac{q^i \deg(v)}{n}$.
\end{claim}
	
By \cref{ref:claim-who-knows-what-for}, the probability that $v$ is contained in $S$ is at most
\begin{align*}
	\sum_{i=0}^q \frac{q^i \degr{v}}{n} \le \frac{c_q \degr{v}}{n}
	\fstop
\end{align*}
	
Recall that $V_\alpha = \{v: r(v) \ge \alpha\}$. By noting that $r(v)$ is exactly the probability that $v$ is contained in $S$, we have that for any $v\in V_\alpha$,
\begin{align*}
	\alpha &\le \frac{c_q \degr{v}}{n}
	\enspace,
\end{align*}
which gives that $\deg(v) \ge \frac{n\alpha}{c_q}$. This completes the proof of \cref{thm:collision-high-degree}.
\end{proof}

\begin{proof}[Proof of \cref{ref:claim-who-knows-what-for}]
We prove the above claim by induction on $i$. If $i = 0$, then the probability that the $q$-RBFS visits $v$ is $\frac{1}{n}$, and thus $v$ is contained in $S_0$ with probability at most $\frac{1}{n}$.
	
Let us assume now that the statement of the claim holds for $i-1$. That is, for any vertex $u$, the probability that $u$ is contained in $S_{i-1}$ is at most $\frac{q^{i-1}\deg(u)}{n}$.
	
Consider an arbitrary vertex $v$. By induction, for every neighbor $u$ of $v$, $u$ is contained in $S_{i-1}$ with probability at most $\frac{q^{i-1} \degr{u}}{n}$. In the $q$-RBFS, each vertex $w$ in $S_{i-1}$ samples $q$ neighbors of $w$ i.u.r., which implies that the probability that $v$ is contained in $S_i$ is at most
\begin{align*}
	\sum_{u \in \ngh{v}} \frac{q^{i-1} \degr{u}}{n} \sum_{j=1}^q \frac{1}{\degr{u}}
	   &=
	\sum_{u \in \ngh{v}} \frac{q^i}{n}
	   =
	\frac{q^{i} \degr{v}}{n}
\enspace,
\end{align*}
where $\ngh{v}$ is the set of neighbors of $v$ in $G$. This yields the proof of \cref{ref:claim-who-knows-what-for}.
\end{proof}

Now we are ready to prove \cref{thm:random-bfs-visit-edges} which upper bounds the probability that the two subgraphs explored by two $q$-RBFS starting at two random vertices share any edge.

\begin{proof}[Proof of \cref{thm:random-bfs-visit-edges}]
Let us consider an arbitrary edge $(x,y)\in E(H_u)$.
\begin{enumerate}
\item $x,y \in V_\alpha$: By \cref{thm:collision-high-degree}, if $v\in V_\alpha$, then $\deg(v) \ge \frac{n\alpha}{c_q}$. Suppose that at least one of $x,y$, say $x$, is also discovered by the $q$-RBFS from $v$, i.e., $x\in V(H_v)$ (otherwise, $(x,y)$ will not be contained in $H_v$ at all). Thus, the probability that $(x,y)$ will be contained in $H_v$ (i.e., visited by $q$-RBFS from $v$) is at most $ \frac{q}{\deg(u)} \le \frac{q c_q}{n \alpha} \le \alpha$. By the union bound, $(x,y)$ will be contained in $H_v$ with probability at most $\frac{2 q c_q}{n \alpha}$.
\item $x,y \notin V_\alpha$: By definition of $V_\alpha$, the probability that $x$ is contained in $V(H_v)$ is at most $\alpha$ (and similarly for $y$). By the union bound, $(x,y) \in E(H_v)$ with probability at most $2 \alpha$.
\item $x \in V_\alpha, y \notin V_\alpha$:  By the above analysis, if $x$ is contained in $V(H_v)$, then $(x,y)$ will be contained in $E(H_v)$ with probability at most $\frac{q c_q}{n \alpha}$. Further note that $y$ will be contained in $V(H_v)$ with probability at most $\alpha$. Thus, the probability that $(x,y) \in E(H_v)$ is at most $\frac{q c_q}{n \alpha} + \alpha \le 2 \alpha$.
\item $x \notin V_\alpha, y \in V_\alpha$: By similar analysis to the above item, the probability that $(x,y) \in E(H_v)$ with probability at most $\frac{q c_q}{n \alpha} + \alpha \le 2 \alpha$.
\end{enumerate}
Furthermore, we note that $|E(H_u)|\le q c_{q}$. This implies that with probability at least $1 - q c_q \cdot 2 \alpha$, none of edges in $E_\alpha$ will be contained in $H_v$.
\end{proof}

\subsubsection{Colored $q$-bounded disc types}

To identify vertices in $V_\alpha$, we assign them unique colors for the analysis. We call a disc \emph{$r$-colored} if in addition to uncolored vertices in the disc, some vertices in the disc may be colored with at most $r$ colors, each color being used at most once. Two colored $q$-bounded disc types $\Delta_1$ and $\Delta_2$ (cf. \cref{def:bounded-disc-type})
are called to be isomorphic to each other, denoted by $\Delta_1\simeq \Delta_2$, if there is a root-preserving isomorphism $f$ from $\Delta_1$ to $\Delta_2$ that also preserves the colors, \ie if and only if $u\in V(\Delta_1)$ is colored with color $c$, then $f(u) \in \Delta_2$ is colored with color $c$.

\begin{definition}
Let $q>0$ be an integer. We let $\mathcal{H}_{q} := \{\Delta_1, \cdots, \Delta_N\}$ denote the set of all possible $r$-colored $q$-bounded disc types, where $N$ is the total number of such types.
\end{definition}

For any given colored $q$-bounded disc type, we have the following definition on the probability of seeing such a disc type from a $q$-RBFS.

\begin{definition}[\textbf{Reach probability of colored $q$-bounded disc types}]
Let $G=(V,E)$ be a graph with $n$ vertices such that each vertex in $V_\alpha$ is assigned to a unique color. Let $\Delta\in \mathcal{H}_q$ be a colored $q$-bounded disc type. The \emph{reach probability of $\Delta$ in $G$} is the probability that a $q$-RBFS from a random vertex in $G$ reveals a graph that is (root- and color-preserving) isomorphic \footnote{Given two rooted graphs $G,H$, a root-preserving isomorphism from $G$ to $H$ is a bijection $f:V(G)\rightarrow V(H)$ such that if $u$ is the root of $V(G)$, then $f(u)$ is the root of $V(H)$; that if $u$ is a colored vertex, then $f(u)$ is also a colored vertex; that $(u,v)\in E(G)$ if and only if $(f(u),f(v))\in E(H)$; and that two colored vertices $u,v$ have different colors if and only if $f(u),f(v)$ have different colors.}
to $\Delta$, that is
\begin{align*}
	\reach_G(\Delta) &:= \Pr_{v\sim V,BFS}[\textsc{RandomBFS}(G,v,q) \simeq \Delta]
    \fstop
\end{align*}

For a given vertex $v$, the reach probability of $\Delta$ from $v$ in $G$ is the probability that a $q$-RBFS from $v$ in $G$ induces a graph that is (root- and color-preserving) isomorphic to $\Delta$, that is
\begin{align*}
	\reach_G(v, \Delta) &:= \Pr_{BFS}[\textsc{RandomBFS}(G,v,q) \simeq \Delta]
    \fstop
\end{align*}
\end{definition}

Recall from \cref{def:bounded-disc} that a $q$-bounded disc of $v$ in $G$ is any subgraph $H$ of $G$ that is rooted at $v$ and can be returned by \textsc{RandomBFS}$(G,v,q)$. In order to estimate the reach probability of a colored $q$-bounded disc type, we consider for each starting vertex $v$, the set of all possible colored $q$-bounded discs, called colored $q$-bounded discs, that one can see from a $q$-RBFS from $v$.

\begin{definition}[\textbf{Reach probability of a $q$-bounded disc}] \label{def:configuration}
Let $G=(V,E)$ be a graph in which all vertices in $V_\alpha$ are uniquely colored. Let $v$ be a vertex in $G$. A \emph{colored $q$-bounded disc} of $v$ is a $q$-bounded disc of $v$ in $G$ with all vertices in $V_\alpha$ colored. We let $\mathcal{C}_v$ denote the set of all possible colored $q$-bounded discs of $v$.\footnote{Note that the number $|\mathcal{C}_v|$ of colored $q$-bounded discs of $v$ can be arbitrarily large.}

For any fixed colored $q$-bounded disc $C\in\mathcal{C}_v$ of $v$, the reach probability of $C$ from $v$ is the probability that a $q$-RBFS from $v$ sees exactly $C$, that is,
\begin{align*}
	\reach_G(v, C) &:= \Pr_{BFS}[\textsc{RandomBFS}(G,v,q) = C]
    \fstop
\end{align*}
\end{definition}

By our definition, the $q$-RBFS from a vertex $v$ in the colored graph $G$ (with vertices in $V_\alpha$ colored) will return exactly one colored $q$-bounded disc of $v$. For each colored $q$-bounded disc type $\Delta$, we let $\mathcal{C}_v(\Delta)$ denote the subset of $\mathcal{C}_v$ which contains all colored $q$-bounded discs of $v$ that are isomorphic to $\Delta$. Therefore, we have the following observation.
\begin{align}
\label{eqn:reach-delta-config}
	\reach_G(v,\Delta) &= \sum_{D\in \mathcal{C}_v(\Delta)}\reach_G(v,D)
    \fstop
\end{align}

\subsubsection{Canonical testers with distinguished vertices in the intersecting discs}
\label{subsubsec:canonical-tester-refined}

Now, we give a refined characterization of the family of forbidden subgraphs corresponding to any constant-query testable property in general graphs, which establishes the basis of our framework for transforming the canonical constant-query testers in the random neighbor model to the random order streaming model.

In our next theorem, we will consider partially vertex-colored graphs and $q$-bounded discs: we color each vertex in $V_\alpha$ with a unique color from a palette of size $|V_\alpha|$. Recall from \cref{thm:number-collisions} that $|V_\alpha|\le \frac{c_q}{\alpha}$. %
We obtain canonical testers of constant-query testable properties by \emph{forbidden colored $q$-bounded discs} instead of \emph{forbidden subgraphs} (that can be composed of more than a single $q$-bounded disc). See \cref{fig:stitching} for an example.

\begin{theorem}\label{thm:refined_characterization}
Let $\Pi = (\Pi_n)_{n \in \setn}$ be a graph property with query complexity $q=q(\varepsilon)$ and let $\varepsilon > 0$, $\alpha \le \frac{1}{12 (q')^2}$, where $q'$ is the number from \cref{thm:canonical-tester}. There is an infinite sequence $\mathcal{F}' = (\mathcal{F}'_n)_{n \in \setn}$ such that for any $n \ge \frac{qc_q}{\alpha^2}$, the following properties hold:
\begin{itemize}
\item $\mathcal{F}'_n$ is a set of graphs, and for each graph $F\in \mathcal{F}'_n$, there exists at least one multiset $S$ of $q'$ many $c_q/\alpha$-colored and rooted $q'$-bounded disc types such that 1) the disc types are pairwise edge-disjoint, and 2) the graph obtained by identifying all vertices of the same color in the bounded discs of $S$ is isomorphic to $F$.
\item For any $n$-vertex graph $G=(V,E)$ such that each vertex in $V_\alpha$ is colored uniquely, let $S_{q'}$ denote the set of $q'$ subgraphs obtained by performing $q'$-RBFS starting at $q'$ vertices sampled i.u.r. Then,
\begin{itemize}
\item if $G \in \Pi_n$, with probability at least $\frac23$, there is no $F \in \mathcal{F}'_n$ such that $F \simeq S_{q'}$,
\item if $G$ is $\varepsilon$-far from $\Pi_n$, with probability at least $\frac23$, there exists $F \in \mathcal{F}'_n$ such that $F \simeq S_{q'}$,
\end{itemize}
where the probability is taken over the randomness of $S_{q'}$.
\end{itemize}
Furthermore, if $\Pi$ can be tested with one-sided error, then for $G\in \Pi_n$, with probability $1$, there is no $F \in \mathcal{F}'_n$ such that $F \simeq S_{q'}$.
\end{theorem}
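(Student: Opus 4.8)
## Proof Proposal for Theorem~\ref{thm:refined_characterization}

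The plan is to build the family $\mathcal{F}'_n$ by ``unfolding'' each forbidden subgraph $F \in \mathcal{F}_n$ from \cref{thm:canonical-tester} into all of its admissible decompositions into $q'$ edge-disjoint colored $q'$-bounded discs, and then to argue that the canonical tester of \cref{thm:canonical-tester} behaves identically (up to error probability and up to the good events established in \cref{thm:number-collisions,thm:random-bfs-visit-edges}) when we track the coloring. Concretely, for each $F \in \mathcal{F}_n$ I would define the set of all multisets $S = \{\Delta_1, \ldots, \Delta_{q'}\}$ of rooted, $\frac{c_q}{\alpha}$-colored $q'$-bounded disc types that are pairwise edge-disjoint and such that identifying all equally-colored vertices across the $\Delta_i$ yields a graph isomorphic to $F$; I would then declare that the graph $F$ obtained from such a valid $S$ belongs to $\mathcal{F}'_n$ (recording the stitching data). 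The stitching map is well-defined and reversible because, as noted in the discussion around \cref{fig:stitching}, each color appears at most once per disc, so the identification is a bijection on each disc and the union is determined by $S$. This gives the first bullet point essentially by construction.

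For the second bullet point, I would fix a colored input graph $G$ (with $V_\alpha$ uniquely colored) and let $S_{q'}$ be the multiset of $q'$ colored subgraphs explored by $q'$ independent $q'$-RBFS runs from $q'$ i.u.r.\ vertices. By \cref{thm:canonical-tester}, the uncolored union $H$ of these subgraphs is root-preserving isomorphic to some $F \in \mathcal{F}_n$ if and only if the canonical tester rejects, and this distinguishes $G \in \Pi_n$ from $G$ being $\varepsilon$-far with error probability at most $\tfrac13$. The point is to upgrade this to the colored statement. Here I would invoke \cref{thm:number-collisions} and \cref{thm:random-bfs-visit-edges}: with $\alpha \le \tfrac{1}{12(q')^2}$ and $n \ge \tfrac{qc_q}{\alpha^2}$, a union bound over the $\binom{q'}{2}$ pairs of RBFS runs shows that, with probability at least (say) $1 - q' c_q \cdot 2\alpha \binom{q'}{2} \ge$ a constant close to $1$ (and one can boost the constant by taking $\alpha$ smaller, which is permitted), no two of the $q'$ explored subgraphs share an edge, and moreover every vertex shared by two explored subgraphs lies in $V_\alpha$ — the latter because a vertex outside $V_\alpha$ has reach probability $<\alpha$, so the probability it is hit by two specific runs is $<\alpha$, again union-bounded over pairs. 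On this good event, the colored union of $S_{q'}$ is exactly the graph obtained by stitching the $q'$ colored disc types along colored vertices, hence $S_{q'} \simeq F'$ for some $F' \in \mathcal{F}'_n$ precisely when the uncolored $H \simeq F$ for the corresponding $F \in \mathcal{F}_n$. Combining the constant failure probability of this good event with the $\tfrac13$ error of the canonical tester (and re-amplifying the tester beforehand so the total is at most $\tfrac13$) yields both directions of the second bullet.

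The one-sided case follows because the canonical tester of \cref{thm:canonical-tester} has one-sided error: for $G \in \Pi_n$ it rejects with probability $0$, i.e.\ the uncolored union is never isomorphic to any $F \in \mathcal{F}_n$, and since the colored union projects onto the uncolored union, it can never be isomorphic to any $F' \in \mathcal{F}'_n$ either (every $F' \in \mathcal{F}'_n$ projects to some $F \in \mathcal{F}_n$). Thus for $G \in \Pi_n$, with probability $1$ there is no $F \in \mathcal{F}'_n$ with $F \simeq S_{q'}$, with no appeal to the good event needed at all.

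I expect the main obstacle to be the bookkeeping in the ``if and only if'' of the second bullet: one must be careful that \emph{every} valid colored decomposition of a rejecting $F$ is placed into $\mathcal{F}'_n$ (so that the actual coloring realized by the RBFS runs on $G$ — which is forced by which vertices of $G$ land in $V_\alpha$ — is matched by some member of $\mathcal{F}'_n$), while simultaneously \emph{no} $F' \in \mathcal{F}'_n$ projects to a non-rejecting uncolored graph (so that accepting instances are not spuriously flagged). The edge-disjointness clause in the definition of $\mathcal{F}'_n$ is what makes this consistent with the good event of \cref{thm:random-bfs-visit-edges}; without it, a decomposition could require discs to overlap on an edge, an event that occurs with negligible probability and would break the correspondence. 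A secondary subtlety is the quantitative choice of $\alpha$: one needs $q' c_q \cdot 2\alpha \binom{q'}{2}$ small enough that the union bound leaves room beside the tester's $\tfrac13$ error; the stated bound $\alpha \le \tfrac{1}{12(q')^2}$ is tailored for exactly this, and I would verify the arithmetic fits (amplifying the base tester's error to a sufficiently small constant first).
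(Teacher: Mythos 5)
Your proposal follows essentially the same route as the paper's proof: you construct $\mathcal{F}'_n$ by decomposing each $F\in\mathcal{F}_n$ from \cref{thm:canonical-tester} into all admissible multisets of edge-disjoint, $c_q/\alpha$-colored $q'$-bounded disc types, invoke \cref{thm:number-collisions} and \cref{thm:random-bfs-visit-edges} with a union bound over the $O((q')^2)$ pairs of RBFS runs to show that the explored discs are pairwise edge-disjoint and meet only at $V_\alpha$-vertices except with small probability, amplify the base tester (the paper uses error $\tfrac16$) so that the extra failure probability still leaves room under $\tfrac13$, and settle the one-sided case by the projection argument. One small quantitative slip worth flagging: for a specific vertex $v\notin V_\alpha$ the probability it is hit by two given runs is $r(v)^2<\alpha^2$, not $<\alpha$, and to rule out \emph{any} shared vertex outside $V_\alpha$ you must sum over such $v$, giving $\sum_{v\notin V_\alpha} r(v)^2 \le \alpha\sum_v r(v)\le \alpha c_q$ (using $\sum_v r(v)\le c_q$ from the proof of \cref{thm:number-collisions}); this costs an extra $c_q$ factor which, like the $q c_q$ factor carried by \cref{thm:random-bfs-visit-edges}, affects only the arithmetic in the choice of $\alpha$ and not the structure of the argument.
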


\begin{proof}
Let $\mathcal{T}$ be a canonical tester for $\Pi$ with error-probability $\frac16$ that is obtained by applying \cref{thm:canonical-tester}, and let $\mathcal{F}_n$ be the corresponding family of forbidden graphs. We prove that we can decompose every $F \in \mathcal{F}_n$ into a family of multisets of colored $q'$-bounded discs, and $\mathcal{F}'_n$ can be constructed as the set of all these families.
	
Let $F \in \mathcal{F}_n$. Let $\mathcal{D}$ be the maximal family of multisets of $q'$ many $q'$-bounded disc types such that for every $D \in \mathcal{D}$, there exists a mapping $f$ from the vertices of the $q'$-bounded disc types in $D$ to the vertices of $F$ with the following properties: (i) $f$ is surjective, (ii) $f$ restricted to a single $\Delta \in D$ is injective, (iii) $f$ restricted to root vertices in $F$ is a bijection, (iv) for every $\Delta \in D$ and $u,v \in V(\Delta)$, $(u,v) \in E(D)$ if and only if $(f(u),f(v)) \in E(F)$.
	
Let $F$ be the subgraph observed by $\mathcal{T}$ and let $D \in \mathcal{D}$ be a corresponding multiset of $q'$ many $q'$-bounded disc types. By \cref{thm:number-collisions}, there exist at most $c_q / \alpha$ vertices (i.e., vertices in $V_\alpha$) in $F$ such that there exist $\Delta_1, \Delta_2 \in D$ and $v_1 \in V(\Delta_1)$, $v_2 \in V(\Delta_2)$ such that $f(v_1) = f(v_2)$. Furthermore, by \cref{thm:random-bfs-visit-edges} and the union bound, with probability at least $1 - (q')^2 \cdot 2 \alpha$, there is no pair $\Delta_1, \Delta_2 \in D$ such that there exists $(u_1, v_1) \in E(\Delta_1), (u_2, v_2) \in E(\Delta_2)$ and $(f(u_1), f(v_1)) = (f(u_2), f(v_2))$. Therefore, we can color all vertices in $V_\alpha$ with at most $c_q / \alpha$ colors and decompose $D$ into a multiset of $q'$ colored $q'$-bounded discs types such that there exists a bijection between the roots of the $q'$-bounded discs and the rooted vertices in $F$. Note that this decomposition is not necessarily unique. See \cref{fig:stitching} for an example.
	
Finally, let $H_1, \ldots, H_{q'}$ be the $q$-bounded discs that are found by a run of the canonical tester. The tester errs with probability $\frac{1}{6}$, which implies that $H_1 \cup \ldots \cup H_{q'}$ is not contained in $\mathcal{F}_n$. Assume that this is not the case, i.e., $H_1 \cup \ldots \cup H_{q'}$ is contained in $\mathcal{F}_n$. Let $\Delta_1, \ldots, \Delta_{q'}$ be the $q$-bounded disc types corresponding to $H_1, \ldots, H_{q'}$. Then, with probability at most $2(q')^2 \alpha$, the graph obtained by identifying vertices from $\Delta_1, \ldots, \Delta_{q'}$ that have the same color is not contained in $\mathcal{F}'_n$. The claim then follows from the fact that the total error probability is $\frac{1}{6} + 2(q')^2 \alpha \le \frac{1}{3}$.
\end{proof}

\section{Estimating the Reach Probabilities in Random Order Streams}
\label{sec4}

Given a canonical tester $\mathcal{T}$ for a property $\Pi$ that is constant-query testable in the random neighbor model, we transform it into a random-order streaming algorithm as follows. Recall from \cref{thm:canonical-tester} that $\mathcal{T}$ explores the input graph by sampling vertices uniformly at random and running $q$-RBFS for each of these vertices. Only if the resulting subgraph contains an instance of a forbidden subgraph from a family $\mathcal{F}$, it rejects. It seems natural to define a procedure like $q$-RBFS for random order streams, namely a procedure $\textsc{StreamCollect}(\stream(G), v, q)$ (\emph{$q$-SC}), and let the streaming algorithm reject only if the union of all $q$-SC contains an instance of a graph from $\mathcal{F}$. However, this raises a couple of issues.

It seems hard to analyze the union of the subgraphs obtained by $q$-SC and relate it to the union of subgraphs observed by $q$-RBFS because the interference between two $q$-SC is quite different from the interference of two $q$-RBFS. Therefore, we use \cref{thm:refined_characterization}, which roughly says that we can decompose each forbidden subgraph into colored $q$-bounded disc types. This leads to the following idea: First, we prove that for any colored $q$-bounded disc type $\Delta$, if $q$-RBFS finds an instance of $\Delta$ in the input graph with probability $p$ (where colors correspond to intersections of multiple RBFS), then $q$-SC finds an instance of $\Delta$ with probability $cp$ for some suitable constant $c$. Then, we prove that if $S$ is a sufficiently large set of vertices sampled uniformly at random, for each colored $q$-bounded disc type $\Delta$, the fraction of $q$-bounded discs found by $q$-SCs started from $S$ that are isomorphic to $\Delta$ is bounded from below by the probability that a $q$-RBFS from a random vertex sees a colored $q$-bounded disc that is isomorphic to $\Delta$. Finally, in the next section, we conclude that if $q$-RBFS finds a forbidden subgraph $F \in \mathcal{F}$ with probability $p$, then the fraction of $q$-SC also finds this subgraph with probability $cp$ (for some suitable constant $c$) because it will find the corresponding colored $q$-bounded discs that assemble $F$.

\subsection{Collecting a $q$-Bounded Disc in a Graph Stream}

In our streaming algorithm, we need to collect a $q$-bounded disc from a starting vertex $v$. We do this in a natural and greedy way: We start with a graph $H=(U,F)$ with $U=\{v\}$ and $F=\emptyset$. Then whenever we see an edge $(u,w)$ from the stream that is connected to our current graph $H$ and adding $(u,w)$ to $H$ does not violate the $q$-bounded radius of $H$, and the degree of $u$ or the degree of $w$ in $H$ is still less than $q^{2q}$, we add it to $F$ (and possibly add one of its endpoint to $U$); otherwise, we simply ignore the edge. Note that the algorithm does not assign colors to the subgraphs it explores. The procedure is formally defined in \cref{alg:collect-nbh-stream}.

\begin{algorithm}
	\caption{Collecting a $q$-bounded disc from a vertex in stream}
	\label{alg:collect-nbh-stream}
	\begin{algorithmic}
		\Function{StreamCollect}{$\stream(G), v, q${}}	
			\State $U\gets \{v\}$
			\State $\forall u \in V: d_u \gets \infty, \ell_u \gets 0$
			\State $d_v \gets 0; F \gets \emptyset$
			\State $H=(U,F)$ with $v$ marked as root
			\For{$(u,w)\gets$ next edge in the stream}
				\If{$(\{u, w\} \cap U \neq \emptyset)$}
					\If{$(u \in U \Rightarrow (\ell_u < q \wedge d_u < q^{2q}) \vee (w \in U \Rightarrow (\ell_w < q \wedge d_w < q^{2q}))$}
						\State $U \gets U \cup \{ u, w \}$
						\State $F \gets F \cup ( u, w )$
						\State $d_u \gets d_u + 1; d_w \gets d_w + 1$
						\State $\ell_u \gets \min(\ell_u, \ell_w + 1); \ell_w \gets \min(\ell_w, \ell_u+1)$
					\EndIf
				\EndIf
			\EndFor
			\State \Return $H$
		\EndFunction
	\end{algorithmic}
\end{algorithm}

\subsection{Relation of One $q$-SC and One $q$-RBFS}

In the following, we show that for any vertex $v$, and any colored $q$-bounded disc $C$ of $v$, the probability of collecting $C$ from $v$ by running \textsc{StreamCollect} on a random order edge stream is at least a constant factor of the probability of reaching $C$ from $v$ by running a $q$-RBFS on $G$. The statements in this section hold for a single run of $q$-SC.

We emphasize that the coloring does not need to be explicitly given. It is sufficient if it can be applied when random access to the graph is given. In particular, we may assign each vertex in $V_\alpha$ a unique color. This enables us to identify the vertices where multiple $q$-RBFS may intersect, which is crucial to apply \cref{thm:refined_characterization} later.

\begin{lemma}
\label{lemma:boundconfiguration}
Let $G$ be a vertex-colored graph. There exists a constant $\cst(q)$ depending on $q$, such that for any colored $q$-bounded disc $C$, it holds that
\begin{align*}
    \Pr_{\stream(G)}[\textsc{StreamCollect}(\stream(G),v,q) \text{ contains $C$}] &\ge \cst(q)\cdot \reach_G(v,C)
    \fstop
\end{align*}
\end{lemma}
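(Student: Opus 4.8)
The plan is to fix a vertex $v$ and a colored $q$-bounded disc $C\in\mathcal{C}_v$, and exhibit a single ``good'' event over the randomness of the stream order $\stream(G)$ on which $\textsc{StreamCollect}(\stream(G),v,q)$ collects a subgraph containing $C$, where this event has probability at least $\cst(q)\cdot\reach_G(v,C)$. The natural candidate for the good event is: \emph{all edges of $C$ arrive before any edge that could ``disturb'' the greedy exploration}. Concretely, $C$ has at most $qc_q$ edges (since a $q$-bounded disc has bounded radius and bounded branching $q^{2q}$ at each included vertex — note $\textsc{StreamCollect}$ caps in-component degree at $q^{2q}$, which is exactly chosen to be at least the maximum degree $\le q\cdot q$ that a $q$-RBFS-induced disc can have, so every $C$ in $\mathcal{C}_v$ is ``collectible''), and the set of potentially disturbing edges is bounded: these are edges incident to vertices of $C$ that would, if read early, push the in-component degree or the radius of some vertex past the threshold before all of $C$ has been absorbed. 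First I would argue that it suffices to control a constant number of ``competitor'' edges per vertex of $C$.

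Here is the key step. Run a $q$-RBFS from $v$; the probability it produces exactly $C$ is $\reach_G(v,C)$. I would couple this with the stream as follows. Condition on $\stream(G)$. Say the stream is \emph{$C$-faithful} if, listing the edges of $C$ as $e_1,\dots,e_t$ ($t\le qc_q$) in BFS-discovery order, every $e_j$ appears in the stream before the $(q^{2q})$-th edge incident (in $G$) to either endpoint of $e_j$ that the algorithm would otherwise have processed, and before any edge that would make $e_j$'s endpoint exceed radius $q$. On a $C$-faithful stream, an easy induction on the prefix of the stream shows $\textsc{StreamCollect}$ adds every $e_j$ to $F$ (when $e_j$ is read, at least one endpoint is in $U$ with $\ell<q$ and $d<q^{2q}$), hence $H\supseteq C$. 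So it remains to lower bound $\Pr_{\stream(G)}[\text{$C$-faithful}]$ by $\cst(q)\cdot\reach_G(v,C)$. For this, note the number of ``ordering constraints'' involves only $C$'s edges plus, for each of the $\le qc_q$ vertices $u$ appearing in $C$, its first $q^{2q}$ incident edges in $G$ — a set $R$ of at most $qc_q\cdot q^{2q}=:M(q)$ edges. Since the stream is a uniformly random permutation of $E(G)$, the relative order of the edges in $R$ is a uniformly random permutation of $R$, and the event that $C$'s edges come in the right relative position among $R$ is a fixed constant $\cst(q)\ge 1/M(q)!$ independent of $n$ and of $G$. This constant lower bounds $\Pr[\text{$C$-faithful}]$ unconditionally, and in particular $\ge \cst(q)\cdot\reach_G(v,C)$ since $\reach_G(v,C)\le 1$.

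The main obstacle I expect is getting the definition of ``disturbing edge'' exactly right so that $C$-faithfulness is genuinely sufficient for $H\supseteq C$, while still depending on only $O_q(1)$ many edges. The subtlety is that $\textsc{StreamCollect}$ makes irrevocable greedy choices: an edge $(u,w)$ with $u\in U$ but $w\notin U$ gets added and increments $d_w$ and puts $w$ into $U$; so edges \emph{not} in $C$ but incident to $C$'s vertices can inflate degrees and block a later $e_j$. One must argue that only the \emph{first} $q^{2q}$ such incident edges per vertex matter (once the degree cap is hit the vertex is ``frozen'' but that is fine as long as all of $C$'s edges at that vertex arrived earlier — and $C$ has at most $q^{2q}$, indeed at most $q^2$, edges at any vertex), and that radius constraints similarly localize. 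I would also need to double-check the edge case where a vertex of $C$ equals $v$ or lies on multiple branches, and the fact that $\ell_u$ in the algorithm is only an upper bound on the true distance (it can only decrease), so the radius check is conservative in the right direction. Handling these bookkeeping points carefully — rather than any deep idea — is where the work lies; the probabilistic core is just ``a constant-size prefix-order event in a uniformly random permutation has constant probability.''
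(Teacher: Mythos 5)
Your argument has a genuine gap at the step where you claim $\Pr_{\stream(G)}[\text{$C$-faithful}]$ is lower bounded by a fixed constant $\cst(q)\ge 1/M(q)!$ independent of $n$ and $G$, and then finish by noting $\reach_G(v,C)\le 1$. That stronger statement is false. The set of ``disturbing'' edges at a vertex $u\in V(C)$ is not a fixed set of $q^{2q}$ edges that you can pin down in advance: the edges that drive $d_u$ to the cap $q^{2q}$ before $C$'s edges at $u$ arrive are \emph{whichever} edges incident to $u$ happen to land earliest in the stream (after $u$ enters $U$), and this is determined by the permutation itself, not by a fixed subset $R\subseteq E(G)$. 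So ``$C$'s edges come in the right relative position among $R$'' does not imply $C$-faithfulness. Concretely, take $G$ an $n$-vertex star with center $c$, let $v$ be a leaf, and let $C=\{v,c,w_1,\dots,w_q\}$ with edges $(v,c),(c,w_1),\dots,(c,w_q)$. Then $\textsc{StreamCollect}(\stream(G),v,q)$ collects $(v,c)$ plus the first $q^{2q}-1$ edges incident to $c$ that appear after $(v,c)$, so the probability it contains $C$ is $\Theta\!\left((q^{2q}/n)^{q}\right)\to 0$, and no $n$-independent constant lower bound can hold. The intuition you close with --- ``a constant-size prefix-order event in a uniformly random permutation has constant probability'' --- is exactly the wrong picture here.

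The lemma is still true because $\reach_G(v,C)$ decays at the \emph{same} rate in the degrees, and what you must show is that the ratio is bounded. A $q$-RBFS that pops $v_t$ makes $q$ independent uniform draws among $v_t$'s $\deg(v_t)$ neighbors, so conditioning on a good BFS-discovery order $\sigma$ of $E(C)$, the per-vertex contribution to $\Pr_{RBFS}[\sigma]$ is $\alpha(j_t)\cdot(1/\deg(v_t))^{q}$. The paper matches this with the stream: conditioned on the prefix, the probability that the $j_t$ edges $\sigma$ prescribes at $v_t$ come next (before being crowded out by $v_t$'s other incident edges) is at least $\beta(j_t,s_{t-1})\cdot(1/\deg(v_t))^{q}$, where $s_{t-1}=O_q(1)$. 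The degree factors cancel in the ratio, leaving $\cst(q)=\prod_t\beta(j_t,s_{t-1})/\alpha(j_t)$. To repair your proposal, replace the absolute lower bound on the faithfulness probability by this factor-by-factor comparison between the stream event and the corresponding $q$-RBFS event; an absolute constant bound is unattainable once $V(C)$ contains high-degree vertices.
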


\begin{proof}
Note that both \textsc{RandomBFS} and \textsc{StreamCollect} can be viewed as the random processes of revealing vertices and edges in $G$. Such a process starts from the fixed vertex $v_0:=v$. At each time step $t\ge 0$, some new edges and vertices (which are added to the queue $S$) are revealed.

For any fixed $v$, and any of its colored $q$-bounded disc $C$, we call an edge ordering $\sigma$ over $E(C)$ \emph{good} if it can be \emph{realized} in a $q$-RBFS from $v$ that discovers $C$. Note that the ordering $\sigma$ also defines an ordering over the vertices, which corresponds to the ordering of popping the vertex from the queue in \textsc{RandomBFS}($G,v,q$).

We note that
\begin{align*}
    \reach_G(v,C)
        =
    \sum_{\sigma: \text{good edge ordering}} \Pr_{RBFS}[\sigma]
    \fstop
\end{align*}

Now let us consider an arbitrary good edge ordering $\sigma$ over $E(C)$. Let $v_0:=v,v_1,\cdots,v_k$ be the corresponding vertex ordering $v_0:=v,v_1,\cdots,v_k$ over $V(C)$, where $k=|V(C)|\le q^{2q}$.

We let $\ent_t$ denote the event that for any $0\le i\le t$, when $v_i$ is popped out from the queue, the edges $(v_i,w_{i,1}), \cdots, (v_i,w_{i,j_i})$ are sampled out in the same order as as the one defined by $\sigma$, where $j_i\le q$. Then it holds that
\begin{align*}
    \Pr_{RBFS}[\sigma]
        &=
    \Pr_{RBFS}[\cap_{i\le k} \ent_i]
        =
    \Pr_{RBFS}[\ent_0] \cdot \Pr_{RBFS}[\ent_1|\ent_0]
        \cdots \Pr_{RBFS}[\ent_k|\cap_{i\le k-1} \ent_i]
    \fstop
\end{align*}
Now we note that by the definition of $q$-RBFS, it holds that for any $t\le k$,
\begin{align*}
    \Pr_{RBFS}[\ent_t| \cap_{i\le t-1} \ent_i]
        &= \alpha(j_t)\left(\frac{1}{\deg(v_t)}\right)^q
    \enspace,
\end{align*}
where $\alpha(j_t)$ is a constant depending on $j_t\le q$.

Now we consider the probability of seeing this edge ordering in random streaming order. Then it holds that
\begin{align*}
    \Pr_{\stream(G)}[\sigma]
        &=
    \Pr_{\stream(G)}[\cap_{i\le k} \ent_i]
        =
    \Pr_{\stream(G)}[\ent_0] \cdot \Pr_{\stream(G)}[\ent_1|\ent_0]
        \cdots \Pr_{\stream(G)}[\ent_k|\cap_{i\le k-1} \ent_i]
    \fstop
\end{align*}

Now recall that we let $j_i$ denote the number of edges that are collected from $v_i$ in the ordering $\sigma$. Let $s_i:=\sum_{r\le i} j_i$ denote the number of edges after collecting edges from $v_i$. By the definition of random ordering of the stream, conditioned on $\cap_{i\le t-1} \ent_i$, the probability of seeing the next $j_t$ edges from $v_t$ is at least the probability that all the next $j_t$ edges appear after the first $s_{t-1}$ edges, times the probability that these $j_{t}$ edges appear earlier than the remaining edges incident to $v_t$. That is, for any $t\le k$,
\begin{align*}
    \Pr_{\stream(G)}[\ent_t| \cap_{i\le t-1} \ent_i]
        &\ge
    \left(\frac{1}{s_{t-1}+1}\right)^{j_t} \!\!\!
        \min_{\lambda_t: 0 \le \lambda_t \le \deg(v_t)-j_t}
            \frac{(\deg(v_t)-\lambda_t-j_t)!}{(\deg(v_t)-\lambda_t)!}
        \ge
    \beta(j_t,s_{t-1}) \cdot \left(\frac{1}{\deg(v_t)}\right)^q
    \enspace,
\end{align*}
where $\lambda_t$ denotes the possible number of edges incident to $v_t$ appeared before the time we collect edges from $v_t$, $\beta(j_t,s_{t-1})$ is a constant depending on $j_t\le q$ and $s_{t-1}\le tq$.

Finally, we note that
\begin{align*}
    \lefteqn{\Pr_{\stream(G)}[\textsc{StreamCollect}(\stream(G),v,q) \text{ contains $C$}]}
        &
        \\
        &\ge
    \sum_{\sigma: \text{good edge ordering}} \Pr_{BFS}[\sigma]
        \\
        &\ge
    \sum_{\sigma: \text{good edge ordering}} \Pr_{\stream(G)}[\ent_0^\sigma] \cdot \Pr_{\stream(G)}[\ent_1^\sigma|\ent_0^\sigma] \cdots \Pr_{\stream(G)}[\ent_k^\sigma|\cap_{i\le k-1} \ent_i^\sigma]
        \\
        &\ge
    \sum_{\sigma: \text{good edge ordering}} \prod_{t=0}^{k} \frac{\beta(j_t,s_{t-1})}{\alpha(j_t)} \Pr_{RBFS}[\ent_0] \cdot \Pr_{RBFS}[\ent_1^\sigma|\ent_0^\sigma] \cdots \Pr_{RBFS}[\ent_k^\sigma|\cap_{i\le k-1} \ent_i^\sigma]
        \\
        &=
    \cst(q) \sum_{\sigma: \text{good edge ordering}} \Pr_{RBFS}[\sigma]
        \\
        &=
    \cst(q) \cdot \reach_{G}(v,C)
    \enspace,
\end{align*}
where we defined $\cst(q):=\prod_{t=0}^{k} \frac{\beta(j_t,s_{t-1})} {\alpha(j_t)}$. This finishes the proof of \cref{lemma:boundconfiguration}.
\end{proof}

The following lemma performs the step from $q$-bounded discs to $q$-bounded disc \emph{types}.

\begin{lemma}
Let $\Delta$ be a fixed colored $q$-bounded disc type. Let $X_v$ denote the indicator variable that \textsc{StreamCollect} from $v$ collects a subgraph that contains a colored $q$-bounded disc of $v$ that is isomorphic to $\Delta$. Let $Y_v$ denote the indicator variable that \textsc{RandomBFS} from $v$ sees a colored $q$-bounded disc of $v$ that is isomorphic to $\Delta$. Then it holds that
\begin{align*}
    \E_{\stream(G)}[X_v] \ge \cst(q)\cdot \E_{RBFS}[Y_v]
    \enspace,
\end{align*}
where $\cst(q)$ is the constant from \cref{lemma:boundconfiguration}.
\end{lemma}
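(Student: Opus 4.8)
The plan is to express both expectations as sums over colored $q$-bounded discs of $v$ and then apply \cref{lemma:boundconfiguration} term-by-term. First I would observe that the event underlying $Y_v$ — that \textsc{RandomBFS}$(G,v,q)$ sees a colored $q$-bounded disc of $v$ isomorphic to $\Delta$ — partitions according to \emph{which} colored $q$-bounded disc $D \in \mathcal{C}_v(\Delta)$ is revealed, since a single run of $q$-RBFS from $v$ returns exactly one colored $q$-bounded disc of $v$ (as noted after \cref{def:configuration}). Hence
\begin{align*}
    \E_{RBFS}[Y_v] = \Pr_{RBFS}[\textsc{RandomBFS}(G,v,q) \simeq \Delta] = \reach_G(v,\Delta) = \sum_{D \in \mathcal{C}_v(\Delta)} \reach_G(v, D)
    \fcomma
\end{align*}
using \cref{eqn:reach-delta-config} for the last equality.

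Next I would handle the streaming side. Here the events "\,\textsc{StreamCollect} contains a disc of $v$ isomorphic to $\Delta$\," for different $D \in \mathcal{C}_v(\Delta)$ need not be disjoint, because \textsc{StreamCollect} may collect a subgraph strictly larger than any single $q$-bounded disc of $v$; so I only get a lower bound. Fix any one target disc $D^* \in \mathcal{C}_v(\Delta)$ (assuming $\mathcal{C}_v(\Delta) \neq \emptyset$, else $\E_{RBFS}[Y_v] = 0$ and the claim is trivial). If \textsc{StreamCollect}$(\stream(G),v,q)$ contains $D^*$ as a subgraph, then in particular it contains a colored $q$-bounded disc of $v$ isomorphic to $\Delta$, so $X_v = 1$. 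Therefore
\begin{align*}
    \E_{\stream(G)}[X_v] = \Pr_{\stream(G)}[X_v = 1] \ge \Pr_{\stream(G)}[\textsc{StreamCollect}(\stream(G),v,q) \text{ contains } D^*] \ge \cst(q) \cdot \reach_G(v, D^*)
    \fcomma
\end{align*}
where the last step is exactly \cref{lemma:boundconfiguration}. Since this holds for every $D^* \in \mathcal{C}_v(\Delta)$, it also holds with $\reach_G(v,D^*)$ replaced by the maximum over $\mathcal{C}_v(\Delta)$; but to recover the full sum $\sum_{D} \reach_G(v,D) = \E_{RBFS}[Y_v]$ I instead sum the bound of \cref{lemma:boundconfiguration} over all $D \in \mathcal{C}_v(\Delta)$ — which is legitimate because $X_v=1$ whenever \textsc{StreamCollect} contains \emph{any} such $D$, and the sum of the per-$D$ probability bounds dominates $\Pr[X_v = 1]$ only if the events were disjoint. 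The clean way to see this: the event $\{X_v = 1\}$ contains, for each $D \in \mathcal{C}_v(\Delta)$, the event $\{\textsc{StreamCollect} \text{ contains } D\}$; these latter events are pairwise disjoint because \textsc{StreamCollect} returns a single subgraph $H$ and "$H$ contains $D$ as a root-preserving colored subgraph of $v$" pins down, within the $\Delta$-isomorphism class, exactly one element of $\mathcal{C}_v(\Delta)$ (distinct discs of $v$ isomorphic to $\Delta$ differ as subgraphs of $G$). Hence
\begin{align*}
    \E_{\stream(G)}[X_v] = \Pr_{\stream(G)}[X_v = 1] \ge \sum_{D \in \mathcal{C}_v(\Delta)} \Pr_{\stream(G)}[\textsc{StreamCollect}(\stream(G),v,q) \text{ contains } D] \ge \cst(q) \sum_{D \in \mathcal{C}_v(\Delta)} \reach_G(v,D) = \cst(q)\cdot\E_{RBFS}[Y_v]
    \fstop
\end{align*}

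The main obstacle is the disjointness claim used in the last displayed chain: one must be careful that, although \textsc{StreamCollect} may collect a graph larger than a single $q$-bounded disc of $v$, the sub-events "$H$ \emph{contains} a specific disc $D$ of $v$" are genuinely disjoint over $D \in \mathcal{C}_v(\Delta)$. This requires arguing that two distinct $q$-bounded discs of $v$ in $G$ that are both isomorphic to $\Delta$ cannot both be contained in the same collected subgraph $H$ as colored rooted subgraphs rooted at $v$ — i.e. that the isomorphism type $\Delta$ together with being rooted at $v$ and living inside $G$ determines the concrete edge set. If this fails (e.g. $H$ could contain two different copies of $\Delta$ rooted at $v$), one should fall back to the weaker but still sufficient bound $\E_{\stream(G)}[X_v] \ge \cst(q)\cdot\max_{D\in\mathcal{C}_v(\Delta)}\reach_G(v,D)$ combined with an appropriate re-accounting, or observe that $\cst(q)$ can absorb the constant loss; in any case, since downstream we only need the inequality up to the constant $\cst(q)$ (which already hides a $q$-dependent factor), a slightly more conservative bookkeeping of overlaps among the $\mathcal{C}_v(\Delta)$-events suffices and does not affect the statement as written.
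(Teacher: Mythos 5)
Your argument mirrors the paper's own (two-line) proof: decompose $\E_{RBFS}[Y_v]$ over concrete discs $D \in \mathcal{C}_v(\Delta)$ via \cref{eqn:reach-delta-config} and apply \cref{lemma:boundconfiguration} term-by-term. But the disjointness claim you flag as the ``main obstacle'' and then try to justify is in fact false, and your proposed fallbacks do not repair it. The events $\{\textsc{StreamCollect}$ contains $D\}$ for distinct $D \in \mathcal{C}_v(\Delta)$ need not be disjoint: for instance, if $v$ has three pendant neighbors, $q = 2$, and $\Delta$ is the rooted $2$-star, then \textsc{StreamCollect} deterministically collects all three pendant edges and the resulting $H$ contains all $\binom{3}{2}$ discs of $\mathcal{C}_v(\Delta)$ simultaneously. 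So ``$H$ contains $D$'' does \emph{not} pin down a unique element of $\mathcal{C}_v(\Delta)$, and the chain $\Pr[X_v = 1] \ge \sum_D \Pr[\text{SC contains } D]$ fails --- the true relation is the union-bound inequality in the opposite direction. Your fallback of taking the maximum over $D$ loses a factor of $|\mathcal{C}_v(\Delta)|$, which the footnote to \cref{def:configuration} stresses can be arbitrarily large, so it cannot be absorbed into the $q$-dependent constant $\cst(q)$. Be aware that the paper's own proof of this lemma contains the very same gap: it writes $\E_{\stream(G)}[X_v] = \sum_{C} \E_{\stream(G)}[X_{v,C}]$ ``by linearity of expectation'', which would require $X_v = \sum_C X_{v,C}$ and contradicts the lemma's statement that $X_v$ is an indicator (in the $3$-star example above the left side is $1$ and the right side is $3$).

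What is genuinely needed is to descend one level into the proof of \cref{lemma:boundconfiguration} and argue disjointness at the level of \emph{realizations} rather than containments. For each $D \in \mathcal{C}_v(\Delta)$ and each good edge ordering $\sigma$ of $E(D)$, let $A_{D,\sigma} := \cap_{i \le k} \ent_i^\sigma$ be the stream event that the stream realizes $\sigma$. Each $A_{D,\sigma}$ forces \textsc{StreamCollect} to collect $D$, hence implies $X_v = 1$, and the proof of \cref{lemma:boundconfiguration} already establishes $\Pr_{\stream(G)}[A_{D,\sigma}] \ge \cst(q)\,\Pr_{RBFS}[\sigma]$. Moreover, the $A_{D,\sigma}$ over all pairs $(D,\sigma)$ \emph{are} pairwise disjoint, because $\sigma$ determines $E(D)$ as an ordered set and $A_{D,\sigma}$ pins down, for each vertex of $D$ in turn, precisely which incident edges occur earliest in the stream relative to the growth of the component; two distinct pairs disagree on this specification at the first vertex where their edge sequences diverge. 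Summing over the disjoint family gives $\Pr[X_v = 1] \ge \sum_{D}\sum_\sigma \Pr_{\stream(G)}[A_{D,\sigma}] \ge \cst(q) \sum_D \reach_G(v,D) = \cst(q)\,\E_{RBFS}[Y_v]$. This unwinding, not the containment-level disjointness you appeal to, is the step that actually carries the argument.
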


\begin{proof}
Let $C\in \mathcal{C}_v(\Delta)$ denote any colored $q$-bounded disc that is isomorphic to $\Delta$. Let $X_{v,C}$ be the indicator variable that \textsc{StreamCollect} from $v$ collects a subgraph that contains $C$. Let $Y_{v,C}$ be the indicator variable that \textsc{RandomBFS} from $v$ sees $C$. Then by linearity of expectation, we have
\begin{align*}
    \E_{\stream(G)}[X_v]
        &=
    \sum_{C\in \mathcal{C}_v(\Delta)}\E_{\stream(G)}[X_{v,C}], \quad \E_{RBFS}[Y_v]=\sum_{C\in \mathcal{C}_v(\Delta)}\E_{RBFS}[Y_{v,C}]
    \fstop
\end{align*}
Note that $\E_{\stream(G)}[X_{v,C}] = \Pr_{\stream(G)}[\textsc{StreamCollect}(\stream(G),v,q) \text{ contains $C$}]$ and that $\E_{RBFS}[Y_{v,C}]=\reach_G(v,C)$. Then the statement of the lemma follows from \cref{lemma:boundconfiguration}.
\end{proof}

Now we consider the probability of seeing a colored $q$-disc type $\Delta$. Note that $\E_{\stream(G)}[X_v] = \Pr_{\stream(G)}[\textsc{StreamCollect}(\stream(G),v,q)\text{ contains a subgraph $F$ with $F \simeq \Delta$}]$. Furthermore, $\E_{RBFS}[Y_v]=\reach_G(v,\Delta)$. Thus, we have the following lemma.

\begin{corollary}
\label{lemma:streamproblower}
For any colored $q$-bounded disc type $\Delta$, it holds that
\begin{align*}
    \Pr_{\stream(G)}[\textsc{StreamCollect}(\stream(G),v,q)\text{ contains a subgraph $F$ with $F \simeq \Delta$}]
        &\ge \cst(q)\cdot \reach_G(v,\Delta)
    \fstop
\end{align*}
\end{corollary}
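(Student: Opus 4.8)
The plan is to derive \cref{lemma:streamproblower} directly from \cref{lemma:boundconfiguration} by summing over all colored $q$-bounded discs of $v$ that realize the target type $\Delta$. The key identity driving the argument is \cref{eqn:reach-delta-config}, which decomposes $\reach_G(v,\Delta)$ as $\sum_{D \in \mathcal{C}_v(\Delta)} \reach_G(v,D)$, i.e.\ the event ``$q$-RBFS from $v$ sees a graph isomorphic to $\Delta$'' is the disjoint union over the (possibly infinitely many) concrete colored $q$-bounded discs $D$ of $v$ whose type is $\Delta$.

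First I would observe that the event ``$\textsc{StreamCollect}(\stream(G),v,q)$ contains a subgraph $F$ with $F \simeq \Delta$'' contains, for each $D \in \mathcal{C}_v(\Delta)$, the event ``$\textsc{StreamCollect}(\stream(G),v,q)$ contains $D$'': if the collected subgraph contains the concrete disc $D$ as a subgraph, then it certainly contains a subgraph isomorphic to $\Delta$. Second, I would note that these latter events, while not necessarily disjoint as events about the stream, are indexed by the disjoint family of discs $\mathcal{C}_v(\Delta)$, so the probability of the union is at least the corresponding probability we can extract from $q$-RBFS — but to be careful, the clean way is to go through expectations as the two preceding (unnumbered) lemmas already do: $\E_{\stream(G)}[X_v] = \sum_{D \in \mathcal{C}_v(\Delta)} \E_{\stream(G)}[X_{v,D}]$ where $X_{v,D}$ is the indicator that $\textsc{StreamCollect}$ collects a subgraph containing $D$, because the discs in $\mathcal{C}_v(\Delta)$ are pairwise distinct as subgraphs of $G$ and a single run of $\textsc{StreamCollect}$ returns one subgraph, which can contain at most one of them in the ``exactly realizes'' sense.

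The actual chain is then immediate: by the second preceding lemma, $\E_{\stream(G)}[X_v] \ge \cst(q) \cdot \E_{RBFS}[Y_v]$, where $X_v$ is the indicator that $\textsc{StreamCollect}$ from $v$ collects a subgraph containing a colored $q$-bounded disc of $v$ isomorphic to $\Delta$, and $Y_v$ is the indicator that $\textsc{RandomBFS}$ from $v$ sees such a disc. Since $X_v$ and $Y_v$ are $0/1$-valued, $\E_{\stream(G)}[X_v] = \Pr_{\stream(G)}[\textsc{StreamCollect}(\stream(G),v,q)\text{ contains a subgraph } F \text{ with } F \simeq \Delta]$ and $\E_{RBFS}[Y_v] = \reach_G(v,\Delta)$, which yields the claimed inequality. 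So the corollary is essentially a restatement of the second preceding lemma after unwinding the indicator-variable expectations into probabilities.

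I do not expect any genuine obstacle here; the content has already been carried out in \cref{lemma:boundconfiguration} and the two lemmas that follow it, and \cref{lemma:streamproblower} is the bookkeeping step that translates ``expected number of isomorphic discs collected'' into ``probability at least one isomorphic disc is collected.'' The only point requiring a sentence of care is the reduction from the type-level statement to the disc-level statement: one must note that ``contains a subgraph isomorphic to $\Delta$'' is implied by ``contains the specific disc $D$'' for each $D \in \mathcal{C}_v(\Delta)$, and that summing the disc-level lower bounds is legitimate because $\mathcal{C}_v(\Delta)$ partitions the relevant RBFS event via \cref{eqn:reach-delta-config} (even when $|\mathcal{C}_v(\Delta)|$ is unbounded, all sums are of nonnegative terms and $\reach_G(v,\Delta) \le 1$, so convergence is not an issue).
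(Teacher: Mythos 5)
Your proposal matches the paper's proof exactly: the corollary is simply the preceding (unnumbered) lemma $\E_{\stream(G)}[X_v] \ge \cst(q)\,\E_{RBFS}[Y_v]$ with the two indicator expectations unwound into the probability on the left and $\reach_G(v,\Delta)$ on the right, as you observe. Your further remarks re-deriving that lemma from \cref{lemma:boundconfiguration} via the decomposition $\E[X_v] = \sum_{D\in\mathcal{C}_v(\Delta)}\E[X_{v,D}]$ also follow the paper's own argument for it (the paper invokes the same ``linearity of expectation'' step without further comment on whether the events are disjoint as subgraph-containment events, which is the same slight hand-wave your ``exactly realizes'' aside is gesturing at).
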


\subsection{Relation of Multiple $q$-SCs and $q$-RBFS}

In the previous section, we related a single run of $q$-RBFS and a single run of $q$-SC. In particular, \cref{lemma:streamproblower} states that if a $q$-RBFS starting from $v$ finds some colored $q$-bounded disc type $\Delta$ with probability $p$, $q$-SC finds the same type $\Delta$ with probability $\Omega(p)$. However, the forbidden subgraphs that the property tester aims to find may be composed of more than one $q$-bounded disc. Therefore, we need to prove that if multiple runs of $q$-RBFS find $q$-bounded disc types $\Delta_1, \ldots, \Delta_k$ whose union contains an instance of a forbidden subgraph $F \in \mathcal{F}'_n$, then multiple runs of $q$-SC will find $\Delta_1, \ldots, \Delta_k$ with probability $\Omega(p)$.

We now show our main technical lemma on estimating the reach probability of $q$-bounded disc types in random order streams. Again, the coloring of vertices in $G$ is implicit and only used for the analysis.

\begin{lemma}
\label{lemma:lower_bound_disc_probability}
Let $G=(V,E)$ be a graph with all vertices in $V_\alpha$ colored defined by a random order stream and let $q>0$ be an integer. Let $c'_q:=\sum_{i=0}^{q+1}q^{2qi}$. Let $\delta>0$, $\alpha = \frac{\delta^6}{6400|\mathcal{H}_q|^2q^{2q}c'_q}$ and let $S$ denote a set of vertices that are chosen uniformly at random with $|S|=s\ge \max\{\frac{1}{20\sqrt{\alpha q^{2q}\cdot c'_q}},\frac{5000|\mathcal{H}_q|}{\cst(q) \delta^3} \} $. Let $\mathcal{J}:=\{H_v: H_v=\textsc{StreamCollect}(\stream(G),v,q), v\in S\}$ denote the set of colored $q$-bounded discs collected by \textsc{StreamCollect} from vertices in $S$. For each type $\Delta\in \mathcal{H}_q$, let $X_\Delta$ denote the number of graphs $H$ in $\mathcal{J}$ such that $H$ contains a subgraph $F$ with $F\simeq \Delta$.

Then it holds that with probability at least $1 - \frac{1}{100}$, for each type $\Delta\in \mathcal{H}_q$,
\begin{align*}
    q_\Delta :=
    \frac{1}{\cst(q)} \cdot \frac{X_\Delta}{s} \ge \reach_G(\Delta) - \delta
    \enspace,
\end{align*}
where $\cst(q)$ is a constant from \cref{lemma:streamproblower}.
\end{lemma}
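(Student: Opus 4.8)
The plan is to fix a colored $q$-bounded disc type $\Delta\in\mathcal{H}_q$, show that $\E[q_\Delta]\ge\reach_G(\Delta)$, prove that $q_\Delta$ falls below its mean by more than $\delta$ with probability at most $\tfrac{1}{100|\mathcal{H}_q|}$, and finish with a union bound over the $|\mathcal{H}_q|$ types. For the expectation: for $v\in S$ let $X_{v,\Delta}$ be the indicator that $\textsc{StreamCollect}(\stream(G),v,q)$ contains a subgraph $F$ with $F\simeq\Delta$, so that $X_\Delta=\sum_{v\in S}X_{v,\Delta}$ and $q_\Delta=X_\Delta/(\cst(q)s)$. By \cref{lemma:streamproblower}, $\E_{\stream(G)}[X_{v,\Delta}]\ge\cst(q)\cdot\reach_G(v,\Delta)$ for every $v$, and since $\reach_G(\Delta)=\frac1n\sum_v\reach_G(v,\Delta)$, taking the expectation over the uniform choice of $S$ gives $\E[X_\Delta]\ge\cst(q)\,s\,\reach_G(\Delta)$, i.e. $\E[q_\Delta]\ge\reach_G(\Delta)$. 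Hence it suffices to show $\Pr[q_\Delta<\E[q_\Delta]-\delta]\le\tfrac{1}{100|\mathcal{H}_q|}$ for each $\Delta$; this is trivial whenever $\reach_G(\Delta)\le\delta$ because $q_\Delta\ge0\ge\reach_G(\Delta)-\delta$, so we may assume $\reach_G(\Delta)>\delta$.

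The concentration will come from Chebyshev's inequality with deviation $\cst(q)s\delta$, so the heart of the proof is a bound $\mathrm{Var}(X_\Delta)=O(\cst(q)^2\delta^2 s^2/|\mathcal{H}_q|)$. Splitting the variance across the two sources of randomness, the contribution of the uniform choice of $S$ (a sum of $\Theta(s)$ independent terms in $[0,1]$) is $O(s)$, and $\sum_{v\in S}\mathrm{Var}_{\stream(G)}(X_{v,\Delta})\le s$; both are absorbed since $s$ is as large as in the statement. The delicate term is $\sum_{u\neq v\in S}\mathrm{Cov}_{\stream(G)}(X_{u,\Delta},X_{v,\Delta})$. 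The key structural fact is that $\textsc{StreamCollect}(\stream(G),u,q)$ explores at most $c'_q$ vertices and its output is a deterministic function of the relative order of the edges incident to those vertices; consequently $X_{u,\Delta}$ and $X_{v,\Delta}$ are independent whenever the two explorations from $u$ and $v$ share neither a vertex nor an edge, so $\mathrm{Cov}_{\stream(G)}(X_{u,\Delta},X_{v,\Delta})$ is at most the probability that the two $q$-SC explorations interfere.

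To bound the interference probability we split on whether a shared vertex lies in $V_\alpha$. Every vertex $w\notin V_\alpha$ has reach probability below $\alpha$; transferring this from $q$-RBFS to $q$-SC via \cref{lemma:boundconfiguration} applied to the single-vertex disc of $w$, and summing over the at most $c'_q$ vertices an exploration can contain, the expected number of sampled pairs whose explorations meet outside $V_\alpha$ is $O(s^2\cdot q^{2q}c'_q\cdot\alpha)$; the choice $\alpha=\frac{\delta^6}{6400|\mathcal{H}_q|^2 q^{2q}c'_q}$ and the lower bounds on $s$ are calibrated exactly so that this is $O(\cst(q)^2\delta^2 s^2/|\mathcal{H}_q|)$. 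For a shared vertex $w\in V_\alpha$ we use $|V_\alpha|\le c_q/\alpha$ and $\deg(w)\ge n\alpha/c_q$ from \cref{thm:collision-high-degree}: since a $q$-SC run uses at most $q^{2q}$ edges at $w$, the edge-reuse argument from the proof of \cref{thm:random-bfs-visit-edges} shows a second exploration reuses an edge of $H_u$ at $w$ with probability $O(q^{2q}c_q/(n\alpha))\to0$, while merely passing through the same high-degree leaf $w$ does not by itself couple the events $\{X_{u,\Delta}=1\}$ and $\{X_{v,\Delta}=1\}$, as one can expose the two runs' behavior at $w$ independently ($\deg(w)$ exceeds the total number of edges either run can touch). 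Combining these estimates bounds the variance as required, and Chebyshev plus the union bound over $\mathcal{H}_q$ completes the proof.

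The main obstacle is the covariance estimate of the second and third paragraphs: unlike $q$-RBFS runs, which use fresh random coins and are literally independent, two $q$-SC runs are coupled through the single random permutation, so one must argue this coupling is negligible once the explored neighborhoods are essentially disjoint and carefully account for the residual dependence when they meet at a high-degree vertex of $V_\alpha$. The degree cap $q^{2q}$ and radius cap $q$ in \cref{alg:collect-nbh-stream} are what localize each run to a bounded, identifiable set of edges, and the smallness of $V_\alpha$ (\cref{thm:number-collisions}) together with the high degree of its members is what keeps the unavoidable intersections from spoiling the second moment; the remaining pieces (the expectation computation, the reduction to $\reach_G(\Delta)>\delta$, and the union bound) are routine given \cref{lemma:streamproblower,lemma:boundconfiguration,thm:number-collisions,thm:collision-high-degree,thm:random-bfs-visit-edges}.
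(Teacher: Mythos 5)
Your overall strategy matches the paper's: lower-bound $\E[q_\Delta]$ via \cref{lemma:streamproblower}, reduce to the case $\reach_G(\Delta)>\delta$, apply Chebyshev with a second-moment bound whose cross term is controlled by showing that two \textsc{StreamCollect} explorations rarely interfere, and finish with a union bound over $\mathcal{H}_q$. The organizational difference (a single Chebyshev over the joint randomness of $S$ and the stream, versus the paper's two-stage argument that first conditions on $S$ being good via a Chernoff bound and a bad-pair count, and then applies Chebyshev over the stream alone) is immaterial; both can be made to work with the stated parameters up to constants.

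There is, however, a genuine gap in your covariance argument. You bound the probability that two $q$-SC explorations meet outside $V_\alpha$ by ``transferring'' the RBFS reach-probability bound $<\alpha$ to a \textsc{StreamCollect} bound via \cref{lemma:boundconfiguration}. But \cref{lemma:boundconfiguration} only gives a \emph{lower} bound $\Pr_{\stream}[\text{SC contains }C]\ge \cst(q)\cdot\reach_G(v,C)$; it does not say that a vertex rarely reached by $q$-RBFS is also rarely reached by $q$-SC. In general \textsc{StreamCollect} can reach substantially more vertices than RBFS (it collects up to $q^{2q}$ edges per vertex and up to $c'_q$ vertices in total), so the set of ``frequently reached by SC'' vertices is a priori larger than $V_\alpha$. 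The paper closes exactly this hole inside the proof of \cref{claim:streamcollectvisitedges}: it defines a fresh set $V'_\alpha$ of vertices that are reached by a random $q$-SC with probability at least $\alpha$, and re-derives the analogues of \cref{thm:number-collisions} and \cref{thm:collision-high-degree} for \textsc{StreamCollect} (with $c'_q$ replacing $c_q$), via the statement (*) and an edge-ordering argument specific to random-order streams. Your proof uses the RBFS-based $V_\alpha$, $c_q$, and degree bound everywhere in the interference analysis, which is not what is needed. Relatedly, your treatment of a shared high-degree vertex $w\in V_\alpha$ with no shared edge (``one can expose the two runs' behavior at $w$ independently'') is asserted rather than proved; the paper's Claim~4.7 uses the cleaner decomposition $\E_\stream[X_uX_v]=\sum_{E(H)\cap E(H')=\emptyset}+\sum_{E(H)\cap E(H')\neq\emptyset}$ together with Claim~4.6 rather than a per-vertex independence argument, which sidesteps this issue.
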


\begin{proof}
We first note that we only need to consider $\Delta$ with $\reach_G(\Delta)\ge \delta$. As otherwise, the statement of the lemma trivially holds. Since we sampled a set $S$ with $|S|\ge \Omega(\frac{\log(|\mathcal{H}_q|)}{\delta^2})$, the following claim follows from the Chernoff bound.

\begin{claim}\label{claim:sample_reach_prob}
With probability (over the randomness of sampling $S$) at least $1-\frac{1}{400|\mathcal{H}_q|}$, it holds that
\begin{align}
\label{ineq:delta-average}
\left|\frac{\sum_{v\in S}\reach_{G}(v,\Delta)}{|S|}-\reach_G(\Delta)\right|
&\le
\frac{\delta}{2}
\fstop
\end{align}
\end{claim}
Furthermore, similar to the proof of \cref{thm:random-bfs-visit-edges}, we have the following claim (the proof is deferred to the end of this section).

\begin{claim}
\label{claim:streamcollectvisitedges}
Let $\alpha$ be $0 < \alpha \le 1$. Let $c'_q:=\sum_{i=0}^{q+1}q^{2qi}$. Let $\alpha_0:=\alpha q^{2q}\cdot c'_q$. Let $n \ge \frac{q^{2q}c'_q}{\alpha^2}$.
Let $H_u$ and $H_v$ denote the subgraphs collected by the \textsc{StreamCollect} starting at two randomly chosen vertices $u$ and $v$, respectively. Let
\begin{align*}
    Y_{uv} &:= \Pr_{\stream(G)}[\text{$H_u$ and $H_v$ share some edge}]
    \fstop
\end{align*}
Then with probability (over the randomness of choosing $u,v$) at least $1-2\sqrt{\alpha_0}$, it holds that
\begin{equation}
\label{eqn:stream_sharing_edge}
	Y_{uv} \le \sqrt{\alpha_0}
    \fstop
\end{equation}
\end{claim}

In the following, we will condition on the following event $\mathcal{E}$: \cref{ineq:delta-average} holds as stated in \cref{claim:sample_reach_prob}, and  \cref{eqn:stream_sharing_edge} holds as stated in \cref{claim:streamcollectvisitedges}. By \cref{claim:sample_reach_prob} and \cref{claim:streamcollectvisitedges}, the event $\mathcal{E}$ holds with probability (over the randomness of sampling vertex set $S$) at least $1-\frac{1}{400 |\mathcal{H}_q| }\cdot |\mathcal{H}_q| - \alpha q^{2q}\cdot c'_q \cdot s^2\ge 1-\frac{1}{200}$ by our choice of $s$ and $\alpha$.

Let us now consider a fixed type $\Delta\in \mathcal{H}_q$.
By \cref{lemma:streamproblower}, for any fixed $v$, we know that $\E_{\stream}[X_v]\ge \cst(q)\cdot \reach_G(v,\Delta)$. Therefore, our estimate $q_\Delta$ for $\Delta$ satisfies that
\begin{align*}
    \E_{\stream}[q_\Delta]
        &\ge
    \frac{1}{\cst(q)}\cdot \cst(q)\cdot \frac{\sum_{v\in S}\reach_G(v,\Delta)}{|S|}
        \ge
    \reach_G(\Delta)-\frac{\delta}{2}
\enspace,
\end{align*}
where the last inequality follows from \cref{ineq:delta-average}.
The variance of our estimator is bounded as follows (the proof is deferred to the end of this section).

\begin{claim}
\label{claim:sample_variance}
Let $\Delta\in \mathcal{H}_q$ be a type such that $\reach_G(\Delta)>\delta$. Then $\Var_\stream[q_\Delta]\le \frac{\E_\stream[q_\Delta]}{s\cdot \cst(q)} + \frac{\sqrt{\alpha q^{2q}\cdot c'_q}}{\cst(q)^2}$.
\end{claim}

Now recall that $\E_\stream[q_\Delta] \ge \reach_G(\Delta) - \frac{\delta}{2} \ge \frac{\delta}{2}$, as we have assumed that $\reach_G(\Delta) \ge \delta$. Let $\eta = \frac{\delta}{2}$, and we apply Chebyshev's inequality to obtain that
\begin{align*}
	\Pr_\stream[|q_\Delta - \E_\stream[q_\Delta]| \ge \eta \E_\stream[q_\Delta]]
        &\le
	\frac{\Var[q_\Delta]}{\eta^2\E_\stream[q_\Delta]^2}
        \\
	   &\le
    \frac{1}{\eta^2}\cdot \frac{\Var[q_\Delta]}{(\E_\stream[q_\Delta])^2}
        \\
	   &\le
	\frac{1}{\eta^2} \left(
	\frac{1}{s\cdot \cst(q) \E_\stream[q_\Delta]} + \frac{\sqrt{\alpha q^{2q}\cdot c'_q}}{\cst(q)^2 \E_\stream[q_\Delta]} \right)
		\\
		&\le
	\frac{8}{s\cdot \cst(q) \delta^3} + \frac{8 \sqrt{\alpha q^{2q}\cdot c'_q}}{\cst(q)^2 \delta^3}	
		\\
        &\le
    \frac{1}{200 |\mathcal{H}_q|}
        \enspace,
\end{align*}
where the last inequality follows from our setting that $\alpha = \frac{\delta^6}{6400|\mathcal{H}_q|^2q^{2q}c'_q} \le  \frac{\cst(q)^4\delta^6}{6400|\mathcal{H}_q|^2q^{2q}c'_q}$ and that $s \ge \max\{\frac{1}{20\sqrt{\alpha q^{2q} \cdot c'_q}},\frac{5000|\mathcal{H}_q|}{\cst(q) \delta^3}\}$.

Therefore, with probability at least $1-\frac{1}{200}$, for all $\Delta\in \mathcal{H}_q$ with $\reach_G(\Delta)\ge \delta$, we have that
\begin{align*}
    q_\Delta
    &\ge (1-\eta)\E_\stream[q_\Delta]
    \ge (1-\frac{\delta}{2})(\reach_G(\Delta)-\frac{\delta}{2})
    \ge \reach_G(\Delta)-\delta
    \fstop
\end{align*}

Finally, with success probability at least $1 - \frac{1}{200} - \frac{1}{200} > \frac{99}{100}$, we have that $q_\Delta \ge \reach_G(\Delta)-\delta$ for all $\Delta \in \mathcal{H}_q$. This finishes the proof of the lemma.
\end{proof}

\subsubsection*{Proofs of \cref{claim:streamcollectvisitedges} and \cref{claim:sample_variance}}

\begin{proof}[Proof of \cref{claim:streamcollectvisitedges}]
We first show that with probability (over the randomness of choosing $u,v$ and $\stream(G)$) at most $2\alpha_0$, there exists some edge that is contained in both $H_u$ and $H_v$. Letting $u,v$ be two vertices that are sampled uniformly at random and $Y := Y_{uv}$, that is
\begin{eqnarray}
\label{eqn:stream_edge2}
	Y = \E_{u,v} [Y_{uv}]
	= \frac{1}{n^2}\cdot \sum_{u,v}Y_{uv}
    \le 2\alpha_0
    \fstop
\end{eqnarray}
	
Let $X$ denote the number of pairs $u,v$ satisfying \cref{eqn:stream_sharing_edge}. Then it holds that
\begin{align*}
    \frac{1}{n^2} \cdot \left((n^2- X) \cdot \sqrt{\alpha_0}\right)
        \le 2 \alpha_0
    \enspace,
\end{align*}
which gives that $\frac{X}{n^2}\ge 1-2\sqrt{\alpha_0}$. This will complete the proof of \cref{claim:streamcollectvisitedges}.

In the following, we sketch the proof of \cref{eqn:stream_edge2}. In order to do so, we only need to modify the corresponding parts in the proof for \cref{thm:random-bfs-visit-edges}. That is, for the given $\alpha>0$, we can define a set $V'_\alpha$ that contains all vertices that can be collected with probability at least $\alpha$ by invoking \textsc{StreamCollect}($\stream(G),v,q$) from a randomly chosen vertex $v$. Then by letting $c'_q:=\sum_{i=0}^{q+1} (q^{2q})^i$, we can show that $|V'_\alpha|\le \frac{c'_q}{\alpha}$ in a similar way as the proof for \cref{thm:number-collisions}, with the only difference that the \textsc{StreamCollect} might see at most $c'_q$ vertices. We can further show that for any $v\in V'_\alpha$, it holds that $\deg(v)\ge n\alpha/c'_q$ as we proved in \cref{thm:collision-high-degree}.
	
Let $S=$\textsc{StreamCollect}($\stream(G),v,q$) for a randomly chosen $v\in V$, and let $S_i\subseteq S$ denote the set of vertices of distance exactly $i$ from the root of $S$. Similar to \cref{ref:claim-who-knows-what-for}, we first prove by induction the following statement:

\begin{itemize}
\item[(*)]\label{statement} For $0\le i\le q+1$, and for every non-isolated vertex $u \in V$, conditioned on the event $\mathcal{E}_{i-1}$ that $u$ is not contained in $\cup_{j\le i-1} S_j$, the probability that $u$ is contained in $S_i$ is at most $\frac{q^{2qi} \deg(u)}{n}$.
\end{itemize}
However, there are some subtle differences in the analysis of the induction from the proof of \cref{ref:claim-who-knows-what-for}. Details follow.
	
First, the statement (*) is true for $i=0$ as we sample $u$ with probability $\frac1n$. Consider the case $i=1$. If we see $u$ conditioned on the event that we did not see it in $S_0$, this implies that one of the neighbors of $u$, say $w \in \ngh{u}$, is in $S_0$, which happens with probability $\frac1n \le \frac{\deg(w)}{n}$. Then, \textsc{StreamCollect} will add the edge $(w,u)$ if it is among the first $q^{2q}$ edges of all $\deg(w)$ edges that are incident to $w$. By the union bound, the probability that $u \in S_1$ conditioned on $\mathcal{E}_i$ is therefore at most $\sum_{w\in \ngh{u}}\frac{\deg(w)}{n} \cdot \frac{q^{2q}}{\deg(w)} = \frac{q^{2q}\deg(u)}{n}$.
Assume the statement holds for $i-1$ and let $i\ge 2$. Then we note that conditioned on $\mathcal{E}_{i-1}$, for any $w\in \ngh{u}$, either $(u,w)$ has already appeared before the time stamp $\tau_w$ that we explore $w$, or $(u,w)$ appeared after we explore $w$. In the former case, we will not see $u$ from $w$. In the latter case, suppose further that there are exactly $j$ edges incident to $w$ that appear after $\tau_w$. Note that since $(u,w)$ appears after $\tau_w$ and $w$ is not in $S_0$, it holds that $j\ge 1$. Then the probability that we will collect $u$ from $w$ is at most $\frac{1}{\deg(w)-j+1}\cdot \frac{q^{2q}}{j}\le \frac{q^{2q}}{\deg(w)}$: with probability $\frac{1}{\deg(w)-j+1}$ the edge $(u,w)$ appears after $\tau_w$ (i.e., after the $\deg(w)-j$ edges that are incident to $w$ and appear before $\tau_w$), and conditioned on the event that $(u,w)$ appears after $\tau_w$, with probability at most $\frac{q^{2q}}{j}$, the edge $(u,w)$ appears among the first $q^{2q}$ edges of all the~$j$ edges incident to $w$ that appear after $\tau_w$. Since $\frac{q^{2q}}{\deg(w)}$ is independent of $j$, the probability that we will collect $u$ from $w$ is at most
\begin{align*}
    \sum_{j=1}^{\deg(w)}
        &
    \left(\Pr[\text{$u$ will be collected from $w$} \mid \text{$j$ neighbors appear after we explore $w$}]\right.
        \\
	   &
    \cdot \left. \Pr[\text{$j$ neighbors appear after we explore $w$}]\right)
        \\
	\le & \frac{q^{2q}}{\deg(w)}
        \fstop
\end{align*}
	
Therefore, by induction on $i-1$, the probability that $u$ is contained in $S_i$ is at most
	
\begin{align*}
	\sum_{\substack{w\in \ngh{u}}} \frac{q^{2q(i-1)}\deg(w)}{n} \frac{q^{2q}}{\deg(w)}
	\le
	\sum_{\substack{w\in \ngh{u}}} \frac{q^{2q}}{n}
	\le
	\frac{q^{2qi} \deg(u)}{n}
	\fstop
\end{align*}
	
This finishes the proof of the statement (*). It follows that the probability that $u$ is contained in $S_i$ is at most $\sum_{i=0}^{q+1} \frac{q^{2qi}\deg(u)}{n}=\frac{c'_q\deg(u)}{n}$. By a similar argument as in \cref{thm:collision-high-degree}, we can then show that for any $v\in V'_\alpha$, it holds that $\deg(v)\ge n\alpha/c'_q$.
	
Finally, we use the same argument for the proof of \cref{thm:random-bfs-visit-edges} to show that with probability $1 - 2 \alpha\cdot q^{2q}\cdot c'_q$, no edge will be contained in both $H_u$ and $H_v$. This finishes the proof of \cref{eqn:stream_edge2}.
\end{proof}

\begin{proof}[Proof of \cref{claim:sample_variance}]
Note that $\Var_\stream[q_\Delta] = \frac{1}{s^2\cst(q)^2} \Var_\stream[\sum_{v\in S}X_v]$. We further have that
\begin{align*}
    \Var_\stream[\sum_{v\in S}X_v]
        &=
    \E_\stream[(\sum_{v\in S}X_v)^2] - (\sum_{v\in S} \E_\stream[X_v])^2
        \\
        &=
	\sum_{v \in S} \E_\stream[X_v^2] +
        \sum_{\substack{u,v \in S \\ u \neq v}} \E_\stream[X_u X_v] -
        \sum_{v\in S} (\E_\stream[X_v])^2 -
        \sum_{\substack{u,v \in S\\ u \neq v}} \E_\stream[X_u]\E_\stream[X_v]
        \\
        &=
    \sum_{v\in S} \E_\stream[X_v] -
        \sum_{v\in S} (\E_\stream[X_v])^2 +
        \sum_{\substack{u,v \in S\\ u \neq v}} \E_\stream[X_u X_v] -
        \sum_{\substack{u,v \in S\\ u \neq v}} \E_\stream[X_u]\E_\stream[X_v]
        \fstop
\end{align*}
	
Let $\mathcal{C}_u, \mathcal{C}_v$ be the set of colored $q$-bounded discs rooted at $u$ and $v$, respectively, that are isomorphic to $\Delta$.
	
Let $u,v \in S$, and consider the execution of \textsc{StreamCollect} on $\stream(G)$ for $u$ and $v$, respectively. To simplify notation, we define the random variables $H_u = (U_u, F_u) \defeq \textsc{StreamCollect}(\stream(G), u, q)$ and $H_v = (U_v, F_v) \defeq \textsc{StreamCollect}(\stream(G), v, q)$. By the definition of expectation,
	
\begin{align*}
    \E_\stream[X_u X_v]
		&=
    \Pr_\stream[X_u = 1 \cap X_v = 1]
        \\
		&=
    \sum_{H \in \mathcal{C}_u} \sum_{H' \in \mathcal{C}_v}
        \Pr_\stream[H_u = H \cap H_v = H']
        \\
		&=
    \sum_{H \in \mathcal{C}_u, H'\in \mathcal{C}_v, E(H)
        \cap E(H')=\emptyset}\Pr_\stream[H_u = H \cap H_v = H']
        \\
		& \,\,\,\, +
    \sum_{H \in \mathcal{C}_u, H'\in \mathcal{C}_v, E(H)\cap E(H')
        \neq \emptyset}\Pr_\stream[H_u = H \cap H_v = H'] %
    \fstop
\end{align*}
	
We further note that for any two $H$ and $H'$, if $E(H) \cap E(H') = \emptyset$, then the events $H_u = H$ and $H_v = H'$ are independent. Therefore, we have that
\begin{align*}
    \sum_{\substack{H \in \mathcal{C}_u, H' \in \mathcal{C}_v\\ E(H) \cap E(H') = \emptyset}} \Pr_\stream[H_u = H \cap H_v = H']
        =
\sum_{\substack{H \in \mathcal{C}_u, H' \in \mathcal{C}_v\\ E(H) \cap E(H') = \emptyset}} \Pr_\stream[H_u = H] \Pr_\stream[H_v = H']
        \le
	\E_\stream[X_u] \cdot \E_\stream[X_v]
	\fstop
\end{align*}
	
Furthermore, we note that by \cref{claim:streamcollectvisitedges},
\begin{align*}
		\sum_{H \in \mathcal{C}_u, H'\in \mathcal{C}_v, E(H)\cap E(H')\neq \emptyset}\Pr_\stream[H_u = H \cap H_v = H']
		=
		\Pr_\stream[\text{$H_u$ and $H_v$ visited the same edge}]
		\le
		\sqrt{\alpha q^{2q}\cdot c'_q}
		\enspace,
\end{align*}
where the last inequality follows from our condition that event $\mathcal{E}$ holds.
	
Thus, we have that for any two $u,v\in S$,
\begin{align*}
	\E_\stream[X_uX_v]
        &\le
    \E_\stream[X_u]\E_\stream[X_v] + \sqrt{\alpha q^{2q}\cdot c'_q}
	\fstop
\end{align*}
	
This implies that
\begin{align*}
	\Var[\sum_{v\in S}X_v]
        &=
    \sum_{v\in S} \E_\stream[X_v] - \sum_{v\in S} (\E_\stream[X_v])^2
        + \sum_{\substack{u,v \in S\\ u \neq v}} \E_\stream[X_u X_v]
        - \sum_{\substack{u,v \in S\\ u \neq v}}
        \E_\stream[X_u]\E_\stream[X_v]
        \\
		&\le
    \sum_{v\in S} \E_\stream[X_v] - \sum_{v\in S} (\E_\stream[X_v])^2
        + s^2 \sqrt{\alpha q^{2q}\cdot c'_q}
        \\
		&\le
    \sum_{v\in S} \E_\stream[X_v] + s^2 \sqrt{\alpha q^{2q}\cdot c'_q}
	\fstop
\end{align*}
	
Thus,
\begin{align*}
    \Var[q_\Delta]
        &\le
    \frac{\sum_{v\in S} \E_\stream[X_v] +
            s^2 \sqrt{\alpha q^{2q}\cdot c'_q}}{s^2 \cst(q)^2}
        \le
    \frac{\E_\stream[q_\Delta]}{s\cdot \cst(q)} +
        \frac{\sqrt{\alpha q^{2q}\cdot c'_q}}{\cst(q)^2}
	\fstop
	\qedhere
	\end{align*}
\end{proof}

\section{Testing Graph Properties in Random Order Streams}
\label{sec:testing-in-streams}

Now we transform constant-query property testers (with one-sided error) into constant-space streaming property testers, and prove \cref{thm:stream-test}. %
The main idea is to explore the streamed graph by \textsc{StreamCollect} and look for the forbidden subgraphs in $\mathcal{F}_n$ that characterize $\Pi$ (see \cref{thm:canonical-tester}). However, in the underlying analysis, we use the (reversible) decomposition of the forbidden subgraphs in $\mathcal{F}_n$ into $\mathcal{F}'_n$ (see \cref{thm:refined_characterization}) to prove the following: if $\mathcal{T}$ finds the colored $q$-bounded discs $\Delta_1, \ldots, \Delta_k$ that compose a forbidden subgraph $F \in \mathcal{F}'_n$ with probability $p$, then the streaming tester will find at least as many copies of $\Delta_1, \ldots, \Delta_k$ as $\mathcal{T}$ (see \cref{lemma:lower_bound_disc_probability}) and can stitch $F$ from these copies. With these tools at hand, we can incorporate our analysis from previous sections to complete the proof of \cref{thm:stream-test}.

\begin{proof}[Proof of \cref{thm:stream-test}]
We let $q_0=q_0(\varepsilon)$ denote the query complexity of $\Pi$. Let $n=|V|$. We present our testing algorithm. Let $q=c\cdot q_0$ for some constant $c$ from Theorem~\ref{thm:canonical-tester}. Let $\alpha = \frac{\delta^6}{6400|\mathcal{H}_q|^2q^{2q}c'_q}$, where $c'_q=\sum_{i=0}^{q+1}q^{2qi}$, and $\delta = \frac{1}{200 \lvert \mathcal{H}_q \rvert}$. If $n\le n_0:= \frac{qc_q}{\alpha^2}$, then we simply store the whole graph. If $n> n_0$, we proceed as follows. Let $\mathcal{F}_n$ be the set of forbidden subgraphs that characterize $\Pi$ as stated in \cref{thm:canonical-tester}. We sample $s \ge \max\{\frac{1}{20\sqrt{\alpha q^{2q}\cdot c'_q}},\frac{5000|\mathcal{H}_q|}{\cst(q) \delta^3} \} $ vertices $S \subseteq V$ and run $\textsc{StreamCollect}(\stream(G), v, q)$ for each $v \in S$ to obtain a subgraph $H_v = (V_v, E_v)$ of $G$. If $H = \cup_{v \in S} H_v$ contains a forbidden subgraph $F \in \mathcal{F}_n$, the tester rejects, otherwise it accepts. See \cref{alg:stream-test} for details.

\begin{algorithm}[H]
	\caption{Testing graph property $\Pi$ in random order stream}
	\label{alg:stream-test}
	\begin{algorithmic}
		\Function{StreamTest}{$\stream(G), n,\varepsilon,\mathcal{F}_n${}}
		\State $S \gets \text{sample s vertices u.a.r. from } V$
		\ForAll{$v \in S$}
			\State $H_v \gets (V_v, E_v) = \textsc{StreamCollect}(\stream(G),v,q)$
		\EndFor
		\State $H \gets (\cup_v V_v, \cup_v E_v)$
		\If{there exists $F\in \mathcal{F}_n$ such that $H$ contains a subgraph $F$}
		\State{Output \textbf{Reject}}
		\Else
		\State{Output \textbf{Accept}}
		\EndIf
		\EndFunction
	\end{algorithmic}
\end{algorithm}
	
The space complexity of the algorithm is $s\cdot q_0^{O(q_0)} = O_{q_0}(1)$ words. For the correctness of the algorithm, we note that for any property $\Pi$ that is constant-query testable with one-sided error, then with probability $1$, we will not see any $F\in \mathcal{F}'_n$ if the graph $G$ satisfies $\Pi$.

On the other hand, if $G$ is $\varepsilon$-far from satisfying $\Pi$, then by \cref{thm:refined_characterization}, with probability at least $\frac{2}{3}$, the subgraph $S_q$ spanned by the union of $q$-bounded discs rooted at $q$ uniformly sampled vertices from $G$ will span a subgraph that is isomorphic to some $F\in \mathcal{F}'_n$. Note that, in contrast to the algorithm above, the analysis uses the decomposition of forbidden subgraphs in $\mathcal{F}_n$ into colored $q$-discs given by \cref{thm:refined_characterization}. The key idea is to use the $q$-bounded discs that \textsc{StreamCollect} and the implicit colors (which are not observed by \textsc{StreamCollect}, but can be used in the analysis to identify vertices in $V_\alpha$) to stitch forbidden subgraphs from $\mathcal{F}'_n$ that are discovered by \textsc{RandomBFS}. We prove that with sufficient probability, for each colored $q$-bounded disc $\Delta$, \textsc{StreamCollect} finds at least as many copies of $\Delta$  as \textsc{RandomBFS}, and therefore, it can reproduce the same types of forbidden subgraphs from $\mathcal{F}'_n$.

By Markov's inequality and the union bound, the probability that at least one $q$-RBFS in the canonical tester for $\Pi$ will return a colored $q$-bounded disc that is isomorphic to a disc $\Delta'$ such that $\reach_{G}(\Delta') < 2 \delta=\frac{1}{100 |\mathcal{H}_q|}$ is at most $\frac{1}{100}$. Let $\mathcal{D}$ be the set of all colored $q$-bounded discs $\Delta$ such that $\reach_{G}(\Delta) \ge 2 \delta$. %

By \cref{lemma:lower_bound_disc_probability}, with probability at least $1-\frac{1}{100}$, for every $\Delta \in \mathcal{D}$, the number of graphs $H_v$ obtained by $\textsc{StreamCollect}$ contains a subgraph isomorphic to $\Delta$ is at least $100 \lvert \mathcal{H}_q \rvert \cdot \reach_{G}(\Delta)\ge 1$. By (implicitly) coloring all vertices in $V_\alpha$, it follows from \cref{thm:refined_characterization} that $H$ contains a forbidden subgraph from $\mathcal{F}'_n$ with probability $1 - \frac{1}{100} - \frac{1}{100} > \frac{2}{3}$.
\end{proof}

\section{Extension to Random Neighbor/Edge Model}
\label{sec:other-models}

While in the random neighbor model one can easily access a random vertex of the input graph, it is not necessarily easy to access a random edge. The lack of this feature is especially unhelpful in the context of general graphs, without bounds for the maximum degree. Indeed, for bounded degree graphs with maximum degree $d$, by choosing a random vertex and then its random neighbor, every edge will be chosen with probability in the interval $[\frac{1}{dn},\frac1n]$, which is close to $\frac1{|E|}$ assuming $|E| = \Omega(n)$. However, for general graphs, we do not have any similar relationship. For example, one could exploit this feature to obtain a lower bound of $\Omega(\sqrt{n})$ for testing planarity in general graphs in the random neighbor model, since in this model it is impossible to distinguish with $o(\sqrt{n})$ queries (in expectation) between a trivial planar graph that is edge-less graph and a highly non-planar graph that has $n-\sqrt{n}$ isolated vertices and a clique of size $\sqrt{n}$ on the other vertices.
To avoid uninteresting cases, we assume that $|E| = \Omega(|V|)$ in this section.

\begin{definition}[\textbf{Random neighbor/edge model}]
\label{def:random-bfs-edge}
In the \emph{random neighbor/edge model}, an algorithm is given $n \in \setn$ and access to an input graph $G = (V,E)$ by a query oracle, where $V = [n]$. The algorithm may ask a query based on all knowledge it has gained by the answers to previous queries. There are two types of queries in the random neighbor/edge model:
\begin{itemize}
\item \textbf{random edge query:} the oracle returns an edge $e \in E$ chosen i.u.r. from $E$;
\item \textbf{random neighbor query} takes as its input a specified vertex $v \in V$, and the oracle returns a vertex that is chosen i.u.r. from the set of all neighbors of $v$.
\end{itemize}
\end{definition}

It turns out that our previous results on canonization of constant-query testers in random neighbor model and the emulation of such algorithms in random order streaming model can be easily extended to the random neighbor/edge model. As most of the proof ideas are the same as before, we will only give informal discussions of the results in this section and describe the difference from the previous results/proofs.

For example, we can give the canonical tester in random neighbor/edge model in the following theorem, whose proof is almost the same as the one for \cref{thm:canonical-tester}, with the only exception that now the canonical tester first samples both $q'$ vertices and $q'$ edges i.u.r.

\begin{theorem}[\textbf{Canonical tester in random neighbor/edge model}]
\label{thm:canonical-tester-edge-model}
Let $\Pi = (\Pi_n)_{n \in \setn}$ be a graph property that can be tested in the random neighbor/edge model with query complexity $q = q(\varepsilon,n)$ and error probability at most $\frac{1}{3}$. Then for every $\varepsilon$, there exists an infinite sequence $\mathcal{F} = (\mathcal{F}_{n})_{n \in \setn}$ such that for every $n \in \setn$,
\begin{itemize}
\item $\mathcal{F}_{n}$ is a set of rooted graphs such that each graph $F \in \mathcal{F}_{n}$ is the union of $q'$ many $q'$-bounded discs;
\item the property $\Pi_n$ on $n$-vertex graphs can be tested with error probability at most $\frac13$ by the following canonical tester:
    \begin{enumerate}
    \item sample $q'$ vertices i.u.r. and mark them \emph{blue roots};
    \item sample $q'$ edges i.u.r. and mark their endpoints \emph{red roots};
    \item for each root $v$ (either a blue or red root), perform a $q'$-RBFS starting at $v$;
\item reject if and only if the explored subgraph is root-preserving isomorphic to some $F \in \mathcal{F}_{n}$,
    \end{enumerate}
\end{itemize}
where $q'=cq$ for some constant $c > 1$. The query complexity of the canonical tester is $q^{O(q)}$. Furthermore, if $\Pi = (\Pi_n)_{n \in \setn}$ can be tested in the random neighbor model with one-sided error, then the resulting canonical tester for $\Pi$ has one-sided error too, i.e., the tester always accepts graphs satisfying $\Pi$.
\end{theorem}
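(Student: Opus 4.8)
The plan is to mirror the four-stage argument in the proof of \cref{thm:canonical-tester} and note that at every stage, the presence of the additional $q'$ edge-sampled roots changes nothing qualitatively: the canonical tester now maintains a sample set $S_0$ that consists of $q'$ vertex-roots (marked blue) together with the $2q'$ endpoints of $q'$ i.u.r.\ edges (marked red), and runs a $q'$-RBFS from each root. First I would define $\mathcal{T}_1$ exactly as before, but where the on-the-fly permutation $\pi$ must also be made consistent with the random edge queries: when $\mathcal{T}$ issues a random edge query, $\mathcal{T}_1$ allocates a so-far-unused red root pair $(u,u')$ from $S_0$ to the (as yet unseen) endpoints of the queried edge, sets $\pi$ accordingly on those two vertices, and then answers all subsequent neighbor queries around $u,u'$ from the corresponding $q'$-RBFS trees as in the original proof. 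Since the red-root pairs are the endpoints of edges sampled i.u.r.\ from $E$ and $\pi$ is a uniformly random permutation independent of which pairs are used, the induced distribution on $\pi(G)$ together with its marked edge is exactly the distribution $\mathcal{T}$ sees when it issues a random edge query to $G$; hence $\Pr[\mathcal{T}_1 \text{ correctly answers } G] \ge \frac56$ as before.

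Next I would carry out the three label-oblivious / order-oblivious reductions $\mathcal{T}_1 \to \mathcal{T}_2 \to \mathcal{T}_3 \to \mathcal{T}_4$ verbatim, with the only bookkeeping change being that the uniform distribution $\mathcal{U}$ is now over permutations of the blue roots, over permutations of the (unordered) set of red-root pairs, and over the $q'$-tuples $(s_{v,i})_{i \in \onerange{q'}}$ for every root $v$ (blue or red). The averaging arguments are identical and introduce no additional queries, so the final tester $\mathcal{T}_4$ again decides deterministically as a function of the explored subgraph $H$ up to root-preserving isomorphism — here the isomorphism must additionally respect the colour (blue/red) of the roots and, for red roots, the pairing into edges. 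We then let $\mathcal{F}_n$ be the set of such $H$ (unions of $q'$ blue-rooted plus $q'$ red-rooted-edge $q'$-bounded discs) on which $\mathcal{T}_4$ rejects, giving the stated conclusion; the one-sided-error claim follows because, exactly as before, none of the four reductions decreases the acceptance probability of a graph in $\Pi_n$, and $\Pr[\mathcal{T}_1\text{ correctly answers }G]=1$ for $G\in\Pi_n$ when the original tester has one-sided error.

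The only genuinely new point — and the step I would expect to require the most care — is verifying that the coupling between $\mathcal{T}_1$'s answers and a genuine execution of $\mathcal{T}$ on $\pi(G)$ remains exact once the two query types are \emph{interleaved}: $\mathcal{T}$ may issue a random edge query, then a neighbor query of one of its endpoints, then another edge query, etc., and the partially-revealed permutation $\pi$ must stay consistent across both. The resolution is that each $q'$-RBFS tree in the sample is an independent random object and the allocation of sample-roots to query-vertices is done lazily and injectively, so at any point $\pi$ is a uniformly random partial injection conditioned only on the answers returned so far; this is precisely the invariant used in the original proof, and it is preserved because a random edge query simply fixes $\pi$ on one fresh pair of labels (whose images are a uniformly random edge of $G$, independent of everything else) before any neighbor exploration of that pair begins. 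Once this invariant is stated cleanly, the rest is a routine repetition of the proof of \cref{thm:canonical-tester}, and the query-complexity bound $q^{O(q)}$ is unchanged since we only doubled the number of roots.
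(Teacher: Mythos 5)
Your proposal matches the paper's approach: the paper simply observes that the proof of \cref{thm:canonical-tester} carries over to the random neighbor/edge model with the only change being that the canonical tester now also samples $q'$ edges i.u.r.\ as additional roots, and you carry out exactly that adaptation by lazily allocating red-root pairs to freshly seen random edges and then running the three averaging reductions $\mathcal{T}_1 \to \mathcal{T}_2 \to \mathcal{T}_3 \to \mathcal{T}_4$ unchanged. Your explicit verification that the on-the-fly permutation invariant survives interleaved edge and neighbor queries is a useful elaboration of a point the paper leaves implicit.
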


We can also have a refined canonical tester for the random neighbor/edge model that uses colors to distinguish intersecting vertices, which corresponds to the one from  \cref{thm:refined_characterization} for the random neighbor model. Again, such a canonization can be proved in a similar way as before, with the only difference that we also need to define the reach probability of a vertex corresponding to a $q$-RBFS from a uniformly sampled edge. This difference can be easily handled by viewing the process of sampling an edge as the (equivalent) process of sampling a vertex with probability proportional to its degree.

Finally, noting that a set of $q'$ random edges can be easily obtained by taking the first $\Theta(q')$ edges in the randomly ordered edge stream, we can then use the above refined canonical tester and the previous emulation proof in \cref{sec4} to transform all the testers in the random neighbor/edge query model to the random order streaming model.

\begin{theorem}\label{thm:stream-test-random-edge}
Every graph property $\Pi$ that is constant-query testable with one-sided error in the random neighbor/edge model is also constant-space testable with one-sided error in the random order graph streams.
\end{theorem}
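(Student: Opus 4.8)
The plan is to mirror the proof of \cref{thm:stream-test} almost verbatim, replacing \cref{thm:canonical-tester} and \cref{thm:refined_characterization} by their random neighbor/edge analogues. First I would invoke \cref{thm:canonical-tester-edge-model} to turn the given one-sided-error constant-query tester into a canonical tester that samples $q'$ vertices i.u.r.\ (blue roots) and $q'$ edges i.u.r.\ (red roots), performs a $q'$-RBFS from every root, and rejects iff the union of the explored discs is root-preserving isomorphic to some $F \in \mathcal{F}_n$. The only new ingredient compared with the random neighbor case is the red roots, so the bulk of the work is to check that sampling edges i.u.r.\ both in the off-line tester and in the stream behaves well. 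The key observation, already noted in the excerpt, is that sampling a uniformly random edge is equivalent to sampling a vertex with probability proportional to its degree; hence the reach probability $r(v)$ of a vertex from an edge-seeded $q$-RBFS is at most $\frac{\deg(v)}{|E|}\cdot c_q$ (the extra factor $\deg(v)/|E|$ replacing the $1/n$ in \cref{ref:claim-who-knows-what-for}), and combined with $|E|=\Omega(n)$ this still gives that the set $V_\alpha$ of vertices with reach probability $\ge\alpha$ (now taken over both vertex- and edge-seeded RBFS) has size $O(c_q/\alpha)$ and that every vertex in $V_\alpha$ has degree $\Omega(n)$. Thus the refined canonical tester of \cref{thm:refined_characterization} goes through with $q'$ replaced by $2q'$ (to account for the $2q'$ roots produced by $q'$ edges) and with the same colored-disc decomposition.

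Next I would set up the streaming algorithm: as in \cref{alg:stream-test}, but additionally read the first $\Theta(q')$ edges off the random-order stream and mark their endpoints as (red) roots; since the stream is a uniformly random permutation of $E$, its prefix of length $\Theta(q')$ is a set of $\Theta(q')$ i.u.r.\ edges (up to the negligible difference between sampling with and without replacement, which only costs an $O(1/|E|)$ additive error and can be absorbed). Then run $\textsc{StreamCollect}(\stream(G),v,q)$ from every root — blue and red — and reject iff the union contains some $F\in\mathcal{F}_n$. The space is still $s\cdot q_0^{O(q_0)}=O_{q_0}(1)$ words, and one-sided error is immediate because \textsc{StreamCollect} only ever returns a subgraph of $G$, so a forbidden subgraph is never spuriously found when $G\in\Pi_n$.

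For soundness, when $G$ is $\varepsilon$-far from $\Pi$ I would reuse \cref{lemma:streamproblower} and \cref{lemma:lower_bound_disc_probability} unchanged — they are statements about a single $q$-SC versus a single $q$-RBFS from a \emph{fixed} vertex, hence indifferent to how the starting vertices are produced — to conclude that for every colored $q$-bounded disc type $\Delta$ with $\reach_G(\Delta)\ge 2\delta$, the $q$-SC runs from the sampled/prefix vertices collect at least $\lceil 100|\mathcal{H}_q|\reach_G(\Delta)\rceil\ge 1$ copies of $\Delta$ with probability $1-o(1)$. Here one must only check that the distribution over starting vertices used by $q$-SC (uniform sampled vertices together with prefix-edge endpoints) matches, up to the $O(1/|E|)$ error above, the distribution over starting vertices of the edge-seeded canonical tester; since \cref{lemma:lower_bound_disc_probability} is proved by a per-vertex expectation-plus-variance argument and its Chernoff/Chebyshev bounds only use that $S$ is a large sample from the relevant distribution, this is routine. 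Combining with the edge-model version of \cref{thm:refined_characterization}, with constant probability the collected colored discs can be stitched along $V_\alpha$ into a forbidden subgraph from $\mathcal{F}'_n$, hence the tester rejects with probability $>\tfrac23$.

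The main obstacle I anticipate is purely bookkeeping: verifying that the edge-seeded reach probabilities obey the degree-weighted analogues of \cref{thm:number-collisions}, \cref{thm:collision-high-degree} and \cref{thm:random-bfs-visit-edges} (and their $q$-SC counterparts inside \cref{claim:streamcollectvisitedges}), and confirming that the stream-prefix trick really yields i.u.r.\ edges with only a harmless $O(1/|E|)=O(1/n)$ error given $|E|=\Omega(n)$. None of this requires new ideas — the degree factor simply cancels against $|E|=\Omega(n)$ wherever $1/n$ appeared before — but it is the only place the argument genuinely differs from \cref{thm:stream-test}, so it deserves a careful (if short) treatment.
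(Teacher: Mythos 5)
Your proposal is correct and takes essentially the same approach as the paper: invoke the edge-model canonical tester (Theorem~\ref{thm:canonical-tester-edge-model}), observe that a uniformly random edge is equivalent to a degree-weighted random vertex so the reach-probability machinery carries over, obtain the random edges from the stream prefix, and then reuse the emulation argument from Section~\ref{sec4}. The paper's own treatment in Section~\ref{sec:other-models} is deliberately informal, and your sketch fills in exactly the details it gestures at (the degree-weighted analogues of \cref{thm:number-collisions,thm:collision-high-degree,thm:random-bfs-visit-edges}, the $2q'$ red roots, and the without-replacement correction), without departing from the intended route.
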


\section{Conclusions}

We gave the first canonical testers for all constant-query testers in the random neighbor model for general graphs and show that one can emulate any constant-query tester with one-sided error in this query model in the random order streaming model with constant space. Our transformation between constant-query testers and streaming algorithms with constant space provides a strong and formal evidence that property testing and streaming algorithms are very closely related. Our results also work for any restricted class of general graphs and other query models, e.g., random neighbor/edge model. It follows that many properties are constant-space testable (with one-sided error) in random order streams, including $(s,t)$-disconnectivity, being $d$-bounded degree, $k$-path-freeness of general graphs and bipartiteness and $H$-freeness of planar (or minor-free) graphs.

It will be very interesting to transform all constant-query testers with \emph{two-sided errors} in the above mentioned query models to constant-space testers in the random order streaming model. Such a result is not possible with the current techniques that only detect a forbidden subgraph (which works for testers with one-sided error), while are not able to approximate the frequencies of all forbidden subgraphs.

\bibliographystyle{alpha}
\bibliography{literature}

\end{document}